\documentclass[11pt, letterpaper]{article}

\usepackage{amssymb,amsmath,amsthm,thmtools,mathtools}
\usepackage{color,xcolor,soul}
\usepackage[colorlinks=true,linkcolor=purple,citecolor=blue]{hyperref}
\usepackage[nameinlink]{cleveref}
\usepackage{booktabs}
\usepackage{makecell}
\usepackage{multirow}
\definecolor{lightgreen}{RGB}{200,255,200}
\sethlcolor{lightgreen}
\usepackage[letterpaper,margin=1in]{geometry}
\newtheorem{theorem}{Theorem}
\newtheorem{corollary}{Corollary}
\newtheorem{lemma}{Lemma}
\newtheorem{proposition}{Proposition}

\makeatletter
\let\c@proposition\c@theorem
\let\c@corollary\c@theorem
\let\c@lemma\c@theorem
\makeatother

\theoremstyle{definition}

\newtheorem*{theorem*}{Theorem}
\usepackage{tikz}
\tikzstyle{vertex}=[circle, draw, inner sep=0pt, minimum size=6pt]

\usetikzlibrary {arrows.meta}
\usepackage{algorithm,algpseudocode}
\usepackage{nicefrac,xspace}

\newcommand{\e}{\varepsilon}

\newcommand{\TT}{\mathbb{T}}
\newcommand{\T}{{\cal T}}

\newcommand{\ba}{{\mathbf a}}
\newcommand{\bq}{{\mathbf q}}
\newcommand{\mfvs}{\mathsf{mfvs}} 
\newcommand{\dfvs}{{\sf DFVS}\xspace}

\title{A deterministic $(2+\e)$-approximation\\ for directed feedback vertex sets in tournaments} 
\date{09.07.2026}

\begin{document}
\author{Ebrahim Ghorbani\thanks{Hamburg University of Technology, Institute for Algorithms and Complexity, Hamburg, Germany. \texttt{ebrahim.ghorbani@tuhh.de}.} \and Matthias Mnich\thanks{Hamburg University of Technology, Institute for Algorithms and Complexity, Hamburg, Germany. \texttt{matthias.mnich@tuhh.de}}}

	\maketitle
  \thispagestyle{empty}

\begin{abstract}
  We nearly settle the polynomial-time approximability of the {\sc Directed Feedback\linebreak Vertex Set} problem in tournaments.
  This problem is {\sc Vertex Cover}-hard, and thus cannot have a $(2-\varepsilon)$-approximation for any $\varepsilon > 0$ in polynomial time assuming the Unique Games Conjecture.
  
  In the past 28 years, several works have attempted to attain this approximability barrier of~2, and have designed algorithms with smaller and smaller approximation factors.
  This includes a $\nicefrac{5}{2}$-approximation by Cai, Deng and Zang (FOCS 1998, SICOMP 2001); a\linebreak $\nicefrac{7}{3}$-approximation by Mnich, Vassilevska Williams and V{\'e}gh (ESA 2016), another $\nicefrac{7}{3}$-approxima\-tion by Aprile, Drescher, Fiorini and Huynh (DAM 2023), and a $\nicefrac{9}{4}$-approximation by Ghorbani and Mnich (ICALP 2026).
  
  Our main result improves upon all of those works: we give the first deterministic polynomial-time $(2+\varepsilon)$-approximation for {\sc Directed Feedback Vertex Set} in tournaments, for all~\mbox{$\varepsilon > 0$}.
  We thereby almost answer an open question by Lokshtanov, Misra, Mukherjee, Panolan, Philip and Saurabh (SODA 2020) who asked for a deterministic 2-approximation in polynomial time.
  
	Furthermore, we extend our result to the broader class of quasi-transitive digraphs.
\end{abstract}

\clearpage
\pagebreak
\setcounter{page}{1}

\section{Introduction}
\label{sec:introduction}
One of the most intensely studied problems in algorithmic graph theory is {\sc Vertex Cover}, which for a given undirected graph $G$ seeks a minimum-sized subset $S\subseteq V(G)$ which intersects all edges of~$G$.
Due to its $\mathsf{NP}$-hardness, one is interested in designing $\alpha$-approximation algorithms for it, which in polynomial time return a vertex cover whose size is at most $\alpha$ times that of a minimum-sized vertex cover.
A 2-approximation algorithm for {\sc Vertex Cover} is known since the 1980s.
Under the Unique Games Conjecture, no $(2-\varepsilon)$-approximation algorithm can exist~\cite{KhotR2008}, for any~$\varepsilon > 0$.

A generalization of {\sc Vertex Cover} on graphs is {\sc Vertex Cover} on hypergraphs.
For 3-uniform hypergraphs, where every hyperedge has size 3, the problem is also known as {\sc 3-Hitting Set}.
{\sc 3-Hitting Set} admits a 3-approximation algorithm, and there cannot exist a $(3-\varepsilon)$-approximation algorithm for any $\varepsilon > 0$, again assuming the truth of the Unique Games Conjecture~\cite{GuruswamiL2014}.
Research has thus focused on identifying relevant classes of {\sc 3-Hitting Set} instances for which $\alpha$-approxi\-mation algorithms with $\alpha < 3$ can be obtained.
V{\'e}gh~\cite{Vegh} explitly posed this as an open question.
A complete answer to this question seems currently not known.

One such well-investigated class is the {\sc Directed Feedback Vertex Set} problem in tournaments, in its unweighted and node-weighted variants.
Since the problem was shown to be at least as hard as {\sc Vertex Cover} in 1990 by Speckenmeyer~\cite{Speckenmeyer1990}, the past 28 years have seen several approximation algorithms for it, see~\Cref{tab:fvstapprox}.
\begin{table}[h!]
  \centering
  \small
  \begin{tabular}{lrlll}
    \toprule
    Reference & \multicolumn{2}{l}{approx. factor} & run time & randomized/det. \\
    \midrule
    Cai, Deng, and Zang~\cite{CaiDZ1998,CaiDZ2001} & ~ & $\nicefrac{5}{2}$ & $n^{O(1)}$ & det.\\
    Mnich, Vassilevska Williams, and V{\'e}gh~\cite{MnichVWV2016} & ~ & $\nicefrac{7}{3}$ & $n^{O(1)}$ & det.\\
    Aprile, Drescher, Fiorini, and Huynh~\cite{AprileDFH2023} & ~ & $\nicefrac{7}{3}$ & $n^{O(1)}$ & det.\\
    Lokshtanov, Misra, Mukherjee,& \multirow{2}{*}{\huge $\{$} & $2$ & $n^{O(1)}$ & rand., w.p. $7/10$\\
    $\qquad$ Panolan, Philip, and Saurabh~\cite{LokshtanovMMPPS2020} & & $2$ & $n^{O(\log n)}$ & det.\\
    Ghorbani and Mnich~\cite{GhorbaniM2026} & ~ & $\nicefrac{9}{4}$ & $n^{O(1)}$ & det.\\
    \emph{This paper} & ~ & $2+\e$ & $n^{O(1)}$ & det.\\
    \bottomrule
  \end{tabular}
  \vspace{-0.7em}
  \caption{Approximation algorithms for {\sc Directed Feedback Vertex Set} in tournaments.\label{tab:fvstapprox}}
\end{table}
When this stream of research was initiated by Cai, Deng and Zang~\cite{CaiDZ2001} in 1998, the major goal has been the design of a polynomial-time (deterministic) 2-approximation algorithm for {\sc Directed Feedback Vertex Set} in tournaments.
This would match the approximation factor of the {\sc Vertex Cover} problem in graphs, which it generalizes, and would give the best constant approximation guarantee under the Unique Games Conjecture.
The quest to obtain such an algorithm was explicitly stated by Mnich, Vassilevska Williams and V{\'e}gh~\cite{MnichVWV2016}.
It was re-iterated in a breakthrough paper by Lokshtanov, Misra, Mukherjee, Panolan, Philip and Saurabh~\cite{LokshtanovMMPPS2020}, who gave a \emph{randomized} 2-approximation for the problem that succeeds with probability $7/10$, and which can be derandomized at the cost of taking \emph{quasi-polynomial} time.

\subsection{Our contributions}
\label{sec:ourcontributions}
Our main result nearly answers the quest for a deterministic 2-approximation algorithm for (node-weighted) {\sc Directed Feedback Vertex Set} in tournaments.
We thereby improve upon the long series of deterministic $\alpha$-approximation algorithms from~\Cref{tab:fvstapprox}, for absolute constants $\alpha > 2$.
\begin{theorem}
\label{thm:2+epsilon-tournament}
  There is a deterministic polynomial-time algorithm for {\sc Directed Feedback \mbox{Vertex} Set} in tournaments, which gives a $(2+\varepsilon)$-approximation for any $\varepsilon > 0$.
\end{theorem}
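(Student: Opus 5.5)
Our plan is to derandomize the Lokshtanov et al.\ approach, trading the exact factor~$2$ for $2+\e$ in exchange for polynomial running time. The standard facts we use are these. A tournament is acyclic iff it has no directed triangle, so \dfvs in tournaments is a weighted $3$-Hitting Set problem on the triangles. A transitive tournament has a unique topological order. And if $v$ is a vertex that we \emph{guess} to lie outside an optimal solution $S^*$, then $T-S^*$ is transitive and $v$ splits it into in-neighbours and out-neighbours of~$v$. Every triangle through~$v$ has the form $u\to v\to w\to u$, and it must be hit at $u$ or at~$w$; equivalently, every back arc from $N^+(v)$ to $N^-(v)$ must be covered. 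This is a \emph{Vertex Cover} instance on the bipartite graph of back arcs, and it can be solved exactly by flow or König's theorem. The remaining triangles lie entirely inside $N^-(v)$ or entirely inside $N^+(v)$, and we recurse on those two sides.

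The algorithm is a recursion in the spirit of Lokshtanov et al. For each choice of pivot $v$, we solve the bipartite cover exactly and recurse on the two sides. Two issues must be handled.
\begin{enumerate}
\item \emph{Charging.} The bipartite cover is optimal, but it is charged against the part of $S^*$ between the two sides. We must show the total cost is at most $2\cdot \mfvs$. The factor~$2$ arises because a vertex of $S^*$ can be charged once in the bipartite step and once more inside the recursive side.
\item \emph{Depth.} If the pivot split is unbalanced, the recursion depth is large, and enumerating guesses gives $n^{O(\log n)}$ time. This is exactly where their deterministic version loses.
\end{enumerate}

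To achieve polynomial time, we limit the guessing to constant depth $d=d(\e)$ and handle unbalanced regions by an LP argument. First we solve the triangle LP (fractional hitting set), with value at most $\mfvs$. Vertices with LP value at least $1/2$ are taken into the solution, since this costs at most $2$ times their LP weight. On the residual instance, every vertex has $x_v<1/2$; for any triangle $uvw$ this gives $x_u+x_v+x_w\ge 1$, so each of the three values exceeds $0$ and each pair sums to more than $1/2$. We then use the known ordering technique, as in Aprile et al.\ and in Ghorbani--Mnich: order the residual vertices by a median or feedback-arc-type ordering, and pick pivots $v$ with small $x_v$. Such pivots lie outside an optimal solution in an averaging sense, so instead of guessing we can choose pivots deterministically by minimizing an LP-based potential, using the method of conditional expectations applied to the random pivot choice of the randomized algorithm. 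The bipartite cover at a pivot costs at most twice the LP mass of the back-arc vertices, because bipartite vertex cover has an integral LP. We then recurse on the two sides with the restricted LP solution.

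The main obstacle is the derandomization step. Concretely, we must show that the expected cost in the randomized algorithm, namely $\le 2\cdot$OPT with probability $7/10$, can be replaced by an expectation bound of $(2+\e)$ times the LP value in which each term is computable. Without that, conditional expectations do not apply. The first thing we would try is the following. For each pivot, define a pessimistic estimator $\Phi(v)=\text{cost of bipartite cover}+2\,(\text{LP}(N^-(v))+\text{LP}(N^+(v)))$, and choose $v$ minimizing $\Phi(v)$. We would then prove that the average over $v$, weighted by $1-2x_v$, of $\Phi(v)$ is at most $(2+\e)\,\text{LP}(T)$. The additive $\e$ absorbs the contribution of a constant number of top levels that are solved by brute-force guessing of $O(1/\e)$ vertices of $S^*$, in $n^{O(1/\e)}$ time. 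If this averaging inequality fails because the triangle LP has integrality gap larger than~$2$ on tournaments, we would strengthen the LP with a constant number of Sherali--Adams rounds, or with the $O(1/\e)$-size subtournament constraints, before running the same argument.
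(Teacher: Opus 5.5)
Your proposal has a genuine gap. The step that is supposed to deliver the $(2+\e)$ guarantee is the averaging inequality for your estimator $\Phi$, and you leave it as something you ``would try''. Worse, the target you set for it cannot hold for $\e<1/3$. You want an expectation bound of $(2+\e)$ times the triangle-LP value on the cost of an integral output. Such a bound would force the integrality gap of the triangle LP on tournaments to be at most $2+\e$. The paper notes that this gap is at least $7/3$, and stays at least $7/3$ even after adding the constraints that every $\mathcal{T}_7$-subtournament contains $3$ solution nodes.

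Your fallbacks do not close this:
\begin{itemize}
\item Adding Sherali--Adams rounds or $O(1/\e)$-size subtournament constraints comes with no argument that the gap then drops to $2+\e$.
\item Brute-force guessing of $O(1/\e)$ vertices of $S^*$ at the top levels does not help, because all deeper levels are still charged to the LP.
\end{itemize}

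The charging is also unjustified on its own terms. The bipartite cover at a pivot already spends up to twice the LP mass of the back-arc vertices. The uncovered back-arc vertices go to the sides and can be charged again at every deeper level where they meet back arcs. So neither the factor $2$ nor constant depth follows. Finally, the randomized algorithm you want to derandomize only comes with a success probability, not a computable expectation bound. The method of conditional expectations therefore has nothing to act on; you flag this yourself but leave it open.

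What is missing is a way to make the residual instance \emph{exactly} solvable after losing only $\e$. The paper does this without any LP, in two phases.
\begin{itemize}
\item \emph{Local ratio.} It repeatedly finds $S$ with $T[S]\in\mathcal{T}_{2k+1}$ and subtracts $\delta$ from all weights in $S$. Every feedback vertex set meets $S$ in at least $k$ of its $2k+1$ nodes, so each step loses only a factor $(2k+1)/k=2+1/k$.
\item \emph{Exact solution of the residual.} The remaining tournament is $\mathcal{T}_{2k+1}$-free. Its triangle digraph $H(T)$ is perfect and cannot contain $\mathbb{T}_{2^{k-1}+1}$. By Erd\H{o}s--Moser, the clique number of $H(T)$, and hence its chromatic number, is bounded by a function of $k$. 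Each color class is transitive and meets every in-neighbourhood in a prefix. A dynamic program over prefix vectors then computes a minimum-weight feedback vertex set exactly in $n^{O(t_k)}$ time.
\end{itemize}
Some such structural, exactly solvable residual step is what your plan lacks.
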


A digraph is \emph{quasi-transitive} if, 
for every node $v$, there is complete adjacency between the in-neighbors and the out-neighbors of $v$.
Quasi-transitive digraphs are of particular interest because of their close connection with comparability graphs.
Indeed, a graph admits a quasi-transitive orientation if and only if it is a comparability graph~\cite{BangJensenH1995}.
Clearly, every tournament is quasi-transitive.
Quasi-transitive digraphs have been studied extensively; for example, the standard textbook on digraphs by Bang-Jensen and Gutin contains an entire chapter on this class~\cite[Ch.~8]{BangJensenG2018}.
We further show how to extend our deterministic $(2+\e)$-approximation for node-weighted {\sc Directed Feedback Vertex Set} from tournaments to quasi-transitive digraphs.
\begin{theorem}
	\label{cor:dfvs-appx-quasi-transitive}
	For every $\e>0$, the node-weighted {\sc Directed Feedback Vertex Set} problem on quasi-transitive digraphs admits a deterministic $(2+\e)$-approximation in polynomial time.
\end{theorem}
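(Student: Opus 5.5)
We reduce to \Cref{thm:2+epsilon-tournament} through the standard decomposition of quasi-transitive digraphs due to Bang-Jensen and Huang~\cite{BangJensenH1995}. Let $D$ be a quasi-transitive digraph with node weights $w$. If the underlying graph of $D$ is disconnected, the components can be handled separately, because no directed cycle crosses between components and the optimum is additive over them. If $D$ is connected but its complement is also connected, the structure theorem states that $D$ is strongly connected with a decomposition $D = T[H_1,\dots,H_k]$, where $T$ is a strong semicomplete digraph and each $H_i$ is non-strong quasi-transitive or a single vertex. If $D$ is connected but not strong, we instead have $D=T[H_1,\dots,H_k]$ with $T$ a transitive tournament and each $H_i$ strong or a single vertex. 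In that case every directed cycle lies inside some $H_i$, so we recurse on each $H_i$ and take the union of the solutions.

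The main case is $D=T[H_1,\dots,H_k]$ with $T$ semicomplete. Every directed cycle of $D$ is of one of two kinds. Either it lies within a single $H_i$, or it projects onto a closed walk in $T$ that uses at least two distinct modules. A cycle of the second kind projects to a walk containing a directed cycle of $T$. Conversely, any directed cycle $v_{i_1}\cdots v_{i_\ell}$ of $T$ lifts to a cycle of $D$ through any choice of one vertex from each $H_{i_j}$.

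Since $H_i$ is not strong in the case analysis, the plan is to characterise a feedback vertex set $S$ of $D$ as follows. Let $R=\{i : H_i\setminus S \neq \emptyset\}$ be the set of modules that keep a vertex. Then $S$ is a feedback vertex set of $D$ if and only if $T[R]$ is acyclic and $S\cap V(H_i)$ is a feedback vertex set of $H_i$ for each $i\in R$.

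This gives the following reduction. Define an auxiliary weighted semicomplete digraph on $V(T)$ in which vertex $i$ has weight $c_i = w(V(H_i)) - \mathsf{mfvs}_w(H_i)$. This is the marginal cost of deleting all of $H_i$ rather than just a minimum feedback vertex set of it. We then
\begin{enumerate}
\item recursively compute $(2+\e)$-approximate solutions $F_i$ for each $H_i$;
\item run the tournament algorithm of \Cref{thm:2+epsilon-tournament} on $T$, after turning digons into $2$-cycles that must be hit (semicomplete digraphs reduce to tournaments together with a vertex cover constraint on the digon graph, and the LP/iterative method underlying \Cref{thm:2+epsilon-tournament} should extend to this setting);
\item output $\bigcup_{i\in X} V(H_i) \cup \bigcup_{i\notin X} F_i$, where $X$ is the returned set.
\end{enumerate}

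The obstacle in the cost accounting is that the weights $c_i$ are computed from the exact values $\mathsf{mfvs}_w(H_i)$, which we do not know. We use $c'_i = w(V(H_i)) - w(F_i)$ instead. The optimum decomposes as $\mathrm{OPT} = \sum_i \mathsf{mfvs}(H_i) + \mathrm{OPT}_T(c)$.

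The charging argument should compare against a combined LP. Take the natural cycle LP on $D$ and restrict it to hitting $2$- and $3$-cycles, together with the recursion. If the tournament algorithm is LP-relative, meaning it returns a solution of cost at most $(2+\e)$ times the LP value on the triangle-hitting LP, then we can merge the LP for $T$ with the LP solutions of the $H_i$. The resulting bound is
\[
  w(\text{output}) \;\le\; (2+\e)\Bigl(\sum_i \mathrm{LP}(H_i) + \mathrm{LP}_T\Bigr) \;\le\; (2+\e)\,\mathrm{OPT}.
\]
This step needs care, because the composite LP must be feasible for the substitution structure.

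The main difficulty I expect is precisely this accounting. An LP-free alternative is a guess-based dynamic program over which modules are fully deleted, but that looks exponential. So I would first try to prove that \Cref{thm:2+epsilon-tournament}'s algorithm is relative to a lower bound that is additive under substitution. Failing that, I would fall back to weights $c'_i$ and bound the error by $\e\,\mathrm{OPT}$ using $w(F_i)\le(2+\e)\mathsf{mfvs}(H_i)$.

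The depth of the recursion does not compound the approximation factor. Each vertex belongs to a bounded chain of decompositions, but the costs are charged additively to disjoint parts, so the factor is applied once rather than multiplied along the chain.
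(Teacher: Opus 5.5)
Your decomposition route can be made to work. As written, though, it has a gap exactly at the step you flag as the main difficulty, and your primary plan for that step cannot succeed. The algorithm behind \Cref{thm:2+epsilon-tournament} is not LP-based: it is local ratio on copies of members of $\T_{2k+1}$, followed by an exact dynamic program. Moreover, no algorithm can be $(2+\e)$-approximate relative to the triangle-hitting LP for small $\e$, because that LP already has integrality gap at least $\nicefrac{7}{3}$ on tournaments.

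Your fallback is also justified incorrectly. The error $c_i-c'_i=w(F_i)-\mfvs(H_i,w)$ is bounded only by $(1+\e)\,\mfvs(H_i,w)$, not by anything of order $\e\cdot\mathrm{OPT}$. The missing observation is that the error has the right sign. Since $F_i\subseteq V(H_i)$ and $w(F_i)\ge\mfvs(H_i,w)$, you have $0\le c'_i\le c_i$, hence $\mfvs(T,c')\le\mfvs(T,c)$. The returned set $X$ satisfies $c'(X)\le(2+\e)\,\mfvs(T,c')$, and therefore
\[
  w(\text{output}) \;=\; \sum_i w(F_i) + c'(X) \;\le\; (2+\e)\sum_i \mfvs(H_i,w) + (2+\e)\,\mfvs(T,c) \;=\; (2+\e)\,\mathrm{OPT},
\]
where the last equality follows from your (correct) characterization of feedback vertex sets of $T[H_1,\dots,H_k]$. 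This closes the induction at every level of the decomposition, whatever its depth, and needs no LP.

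Three further repairs are needed.
\begin{itemize}
\item Step~(2) rests on an unproven extension of \Cref{thm:2+epsilon-tournament} to semicomplete digraphs, so drop it. The paper's argument treats $D$ as digon-free, since $D$ is a subdigraph of a tournament. Then the quotient of a strong $D$ is digon-free too, because a digon between two modules would create digons in $D$. So the quotient is a tournament and \Cref{thm:2+epsilon-tournament} applies verbatim.
\item Your case split is wrong: a connected quasi-transitive digraph whose underlying graph has connected complement need not be strong. The orientation $a\to b\leftarrow c\to d$ of a path on four vertices is quasi-transitive and acyclic.
\item In the non-strong case the quotient is a transitive oriented graph, not necessarily a tournament. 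Simply use the strong/non-strong dichotomy of Bang-Jensen and Huang.
\end{itemize}

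For comparison, the paper needs none of this machinery. It invokes Lemma~4 of~\cite{GhorbaniM2026}, which embeds $D$ into a tournament on $V(D)$ with the same triangle digraph. The two then have exactly the same feedback vertex sets, and a single call to \Cref{alg:weighted-local-ratio} suffices. Your repaired argument is longer, but it is self-contained given the structure theorem. It also shows more generally that any approximation guarantee for tournaments stated relative to the integral optimum lifts through substitution.
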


\subsection{Our methodology}
A natural approach to obtain a good approximation algorithm, with approximation factor $\alpha$ close to or equal to 2, is to lower-bound the weight of optimal directed feedback vertex set $F^\star$ by the optimal solution value of the linear programming relaxation of the natural covering integer programming formulation of the problem which stipulates that at least one node of each directed triangle should be included into $F^\star$.
Indeed, some of the previous deterministic algorithms for the problem solve some linear program as a subroutine~\cite{GhorbaniM2026,MnichVWV2016}, or even some Sherali-Adams hierarchies of it~\cite{AprileDFH2023}.
However, a careful analysis of the integrality gap shows that it is at least $\nicefrac{7}{3}$, even when the linear program is augmented with constraints that require $F^\star$ to include 3 nodes of each $\mathcal T_7$-subtournament.
Here, $\mathcal{T}_{2k+1}$ denotes the family of tournaments on $2k+1$ nodes whose minimum directed feedback vertex set has size at least $k$.
A tournament is said to be \emph{$\mathcal{T}_{2k+1}$-free} if it contains no subtournament belonging to $\mathcal{T}_{2k+1}$.
Removing all copies of $\mathcal T_7$-subtournaments (of which there are 121 non-isomorphic ones) by rounding LP solutions, and understanding the structure of $\mathcal T_7$-free tournaments, is crucial to both $\nicefrac{7}{3}$-approximations.
The next result in line is a $\nicefrac{9}{4}$-approximation by Ghorbani and Mnich~\cite{GhorbaniM2026}.
In a first step, it removes copies of $\mathcal T_9$-subtournaments.
In a second step, for $\mathcal T_9$-free tournaments, they then analyzed the structure of the digraph induced by the arcs which belong to triangles.
Instead of a layering algorithm, they employed fundamental results from partial order theory such as the Greene-Kleitman Theorem to obtain their $\nicefrac{9}{4}$-approximation.
The first removal step alone causes a loss of $\nicefrac{9}{4}$ in the approximation factor, and thus cannot lead to an $\alpha$-approximation algorithm for any $\alpha < \nicefrac{9}{4}$.

Interestingly, our novel algorithm does not solve any linear program, and also does not employ the Greene-Kleitman Theorem.
Instead, we provide a purely combinatorial algorithm.
To overcome the impediment of the $\nicefrac{9}{4}$-approximation barrier, a natural step is to instead remove all copies of $6$-node subtournaments whose minimum feedback vertex sets have size at least 3.
There is only one such tournament, which we denote by $T_6$, 
and removing all copies of it does not harm the approximation factor of 2.
However, it appears that $T_6$-free tournaments do not possess enough structure to devise any algorithmic approaches for obtaining a 2-approximation on them.
In particular, $T_6$ is not a ``hero''.
The notion of hero was introduced by Berger, Choromanski, Chudnovsky, Fox, Loebl, Scott, Seymour, and Thomass{\'e}~\cite{BergerCCFLSST2013} when investigating the cycle structure of tournaments.
A tournament $H$ is a \emph{hero} if there exists a constant $c_H$ such that every $H$-free tournament $T$ has \emph{dichromatic number} $\vec{\chi}(T)$ at most $c_H$, that is, the nodes of $T$ can be partitioned into at most~$c_H$ acyclic subtournaments.
Their main result is a characterization of all heroes, a consequence of which is that~$T_6$ is \emph{not} a hero, and thus the dichromatic number of $T_6$-free tournaments cannot be bounded.

As the first part of our approach, we establish a hero-type property. Specifically, we prove that for every $k \ge 1$, every $\T_{2k+1}$-free tournament satisfies a stronger structural property which, in particular, implies that its dichromatic number is bounded by a constant $c_k$ depending only on $k$.

For a tournament $T$, its \emph{triangle digraph} $H(T)$ is the spanning subdigraph of $T$ whose arc set consists of the arcs of $T$ that belong to directed triangles in $T$.
In particular, any node of~$T$ that is not contained in any directed triangle is an isolated node of $H(T)$.
We first use a key property of the triangle digraph that was recently established by Ghorbani and Mnich~\cite{GhorbaniM2026}, namely, that its underlying undirected graph is perfect.
By considering further forbidden-subtournament arguments, we obtain a bound on the chromatic number of the triangle digraph of $\T_{2k+1}$-free tournaments.
In what follows, a \emph{proper $t$-coloring} of $H(T)$ refers to a proper $t$-node-coloring, that is, a partitioning of the node set into $t$ independent sets.

\begin{theorem}
\label{thm:t(k)-bound-chi(H)}
  For every integer $k\ge1$, there exists an integer $t=t_k$ such that for every $\T_{2k+1}$-free tournament $T$ its triangle digraph $H(T)$ admits a proper $t$-coloring.
\end{theorem}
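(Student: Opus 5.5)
Since the underlying undirected graph of $H(T)$ is perfect (the result of Ghorbani and Mnich~\cite{GhorbaniM2026} quoted above), its chromatic number equals its clique number. So it suffices to bound $\omega(H(T))$ by a constant $t_k$ for every $\T_{2k+1}$-free tournament $T$. A clique of $H(T)$ is a set $Q$ of nodes, inducing a subtournament $T[Q]$, such that every arc of $T[Q]$ lies in some directed triangle of $T$; the third node of that triangle may lie outside $Q$. I would show that a large such $Q$ forces a member of $\T_{2k+1}$, i.e.\ a $(2k+1)$-node subtournament with no feedback vertex set of size less than $k$.

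\textbf{Step 1: Ramsey reduction.} Every tournament on $2^{m}$ nodes contains a transitive subtournament on $m+1$ nodes. So if $|Q|$ is huge, we may pass to $Q'\subseteq Q$ with $T[Q']$ transitive, say $q_1\to q_2\to\dots\to q_m$ with $q_i\to q_j$ for all $i<j$. Each arc $q_iq_j$ still lies in a directed triangle $q_i\to q_j\to w_{ij}\to q_i$, and $w_{ij}\notin Q'$ because $T[Q']$ is transitive. Thus we obtain a transitive sequence in which every forward arc is ``closed backwards'' by an outside witness. If $T[Q]$ is not transitive we have extra directed triangles inside, which only helps, but the transitive case is the core.

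\textbf{Step 2: disjoint triangles and the dominance structure.} The goal is to extract $k$ triangles from these witnesses that are essentially node-disjoint and arranged so that no small set of fewer than $k$ nodes hits all directed triangles in the union. Taking $k$ arcs only yields $k$ node-disjoint triangles, which needs $3k$ nodes, while we need $2k+1$ nodes with feedback number at least $k$. So I would instead exploit sharing, aiming for a configuration like the standard members of $\T_{2k+1}$. One candidate is a chain $q_1,\dots,q_{k+1}$ plus witnesses $w_1,\dots,w_k$ with $w_i$ closing $q_iq_{i+1}$. This has $2k+1$ nodes and contains triangles $q_iq_{i+1}w_i$. The arcs between the $w$'s and the nonconsecutive $q$'s are uncontrolled, and deleting $k-1$ nodes must leave a cycle.

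To control the uncontrolled arcs, apply Ramsey again, now on colorings of pairs and triples of indices. Choose $w_{ij}$ for $i<j$ and record the orientations between $w_{ij}$ and each $q_\ell$, together with the relative order of $i,j,\ell$. This is finitely many colors, so by hypergraph Ramsey (or a canonical-pattern argument) there is a large index subset on which the pattern is homogeneous. For each of the finitely many homogeneous patterns, I would then exhibit either a member of $\T_{2k+1}$ directly, or many directed triangles among the $q$'s and $w$'s that are arranged as a ``packing'' of cycles with high feedback number. The latter uses that feedback number at least $k$ on $2k+1$ nodes is implied by, for instance, $k$ triangles through a common structure where each node covers few of them.

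\textbf{Main obstacle.} The hard part is the case analysis of the homogeneous patterns. One must verify that every pattern yields a $(2k+1)$-node subtournament with feedback number at least $k$, rather than merely a large feedback number on more nodes. I would first try to reduce to a known lemma-type statement: if $T$ contains $k$ node-disjoint directed triangles that are ``nested'' (all arcs between distinct triangles point the same way relative to a linear order), then adding one node dominating or dominated appropriately gives a $\T_{2k+1}$. I would then show that every homogeneous pattern yields such a nested family of triangles plus a connecting node. If that fails, a fallback is to prove the bound by induction on $k$. Find one triangle $\Delta$ such that the rest of a large clique still contains a large clique whose triangle structure, combined with $\Delta$ and one extra node, lifts a $\T_{2k-1}$ to a $\T_{2k+1}$.
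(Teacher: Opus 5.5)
Your reduction matches the paper's. Perfectness of $H(T)$ (\Cref{lem:H(D)perfect}) reduces the claim to bounding $\omega(H(T))$. The Erd\H{o}s--Moser bound applied to a large clique of $H(T)$ then yields a transitive $T[Q']$ all of whose arcs lie in directed triangles of $T$, that is, a copy of $\TT_m$ inside $H(T)$.

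\textbf{The gap.} What you do not prove is the step that carries the theorem: that a sufficiently long such chain forces a member of $\T_{2k+1}$. You outline a hypergraph-Ramsey reduction to homogeneous patterns, but leave the case analysis open (your ``main obstacle''). So no bound on $\omega(H(T))$ is actually established. The paper does no such analysis. It invokes \Cref{lem:2^{k-1}+1}, which states exactly that if $H(T)$ contains $\TT_{2^{k-1}+1}$, then $T$ contains a subtournament from $\T_{2k+1}$. With $r=2^{k-1}+1$, any clique of size $2^{r-1}$ in $H(T)$ contains a $\TT_r$ lying in $H(T)$. Hence $\omega(H(T))\le 2^{2^{k-1}}-1=t_k$, and perfectness finishes the proof. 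That proposition is stated right before the theorem, so you could have cited it just as you cited perfectness.

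\textbf{Why your sketched substitutes fail.} A member of $\T_{2k+1}$ has no transitive subtournament on $k+2$ nodes, so it must be very densely cyclic. Configurations built from a few triangles do not achieve this.
\begin{itemize}
\item Your chain $q_1,\dots,q_{k+1}$ with consecutive witnesses $w_1,\dots,w_k$ need not have feedback number $k$. The listed triangles $q_iq_{i+1}w_i$ are all hit by $q_2,q_4,\dots$, i.e.\ by $\lceil k/2\rceil$ nodes. For a concrete case with $k=3$, take the order $w_1,q_1,w_2,q_2,w_3,q_3,q_4$ with all arcs forward except $q_{i+1}\to w_i$. This realizes your configuration, yet $\{q_2,w_3\}$ is a feedback vertex set. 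Nothing forbids this, since the witnesses for non-consecutive pairs may lie outside the chosen $2k+1$ nodes.
\item The ``nested disjoint triangles plus one node'' lemma cannot produce a $(2k+1)$-node tournament. For $k\ge 2$, $k$ disjoint triangles already occupy $3k>2k+1$ nodes. Moreover, any $2k+1$ nodes of a nested family contain at most $\lfloor (2k+1)/3\rfloor<k$ whole triangles, and in the nested case that count is exactly their feedback number.
\end{itemize}
The missing ingredient is therefore \Cref{lem:2^{k-1}+1} itself, or a genuinely different proof of it than the constructions you sketch.
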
	

In particular, \Cref{thm:t(k)-bound-chi(H)} implies the aforementioned hero-type property of the family $\T_{2k+1}$. 

The color classes $C$ of $H(T)$ induce subtournaments in $T$ with a highly restricted structure, implying that any transitive subtournament of $T$ intersects each $C$ in a \emph{prefix} of $C$.
This property gives rise to a dynamic programming approach which computes a maximum-weight subtournament, and thus a minimum-weight feedback vertex set of $T$, in polynomial time for fixed~$t$.

\begin{theorem}
\label{thm:dynamic-program}
  Let $T$ be a tournament of order $n$ with non-negative node weights $w:V(T)\to \mathbb{Q}_{\ge 0}$.
  There	is an algorithm that, given a proper $t$-coloring of $H(T)$, computes a minimum-weight directed feedback vertex set	of $(T,w)$ in time~$O(tn^{t+1})$.
\end{theorem}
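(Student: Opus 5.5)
The plan is to reduce the problem to computing a maximum-weight transitive (acyclic) subtournament $S$ of $T$; then $V(T)\setminus S$ is a minimum-weight directed feedback vertex set. The coloring will let us build $S$ vertex by vertex in its topological order, while remembering only one vertex per color class.

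\textbf{Step 1: structure of a color class.} Let $C$ be a color class of the given proper $t$-coloring. Since $C$ is independent in $H(T)$, no arc of $T[C]$ lies on a directed triangle of $T$. In particular $T[C]$ itself contains no directed triangle, so it is transitive. Hence $C$ has a unique ordering $c_1,\dots,c_m$ with $c_i\to c_j$ whenever $i<j$; this ordering is computable in $O(n^2)$ time.

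The key local fact is a nesting property. Let $u\to v$ be an arc not on any triangle, and let $w\neq u$ with $v\to w$. Then $w\to u$ would close the triangle $u\to v\to w\to u$, so $u\to w$. Thus
\[
N^+(v)\setminus\{u\}\subseteq N^+(u).
\]
So for $i<j$ in the same class, every out-neighbor of $c_j$ other than $c_i$ is also an out-neighbor of $c_i$.

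\textbf{Step 2: a state-compressed description of transitive subtournaments.} Let $S$ induce a transitive subtournament, listed in its topological order $s_1\to s_2\to\dots$, where each vertex beats all later ones. Vertices of $S\cap C$ must appear in the order of $C$, because the arcs inside $C$ point forward in the ordering $c_1,\dots,c_m$.

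Suppose $s_1,\dots,s_r$ have already been chosen, and let $x_c$ denote the last chosen vertex of class $c$ (if any). I claim that a new vertex $v$ can be appended if and only if $v$ lies after $x_{c(v)}$ in its own class and $x_c\to v$ for every class $c$ with $x_c$ defined.
\begin{itemize}
\item Necessity is clear.
\item For sufficiency, take any earlier chosen $y$ in class $c$, so $y\to x_c$ and $y\neq v$. Since $x_c\to v$, the nesting property applied to the arc $y\to x_c$ gives $y\to v$.
\end{itemize}
Hence $v$ is beaten by all previously chosen vertices, and the extended sequence is still transitive.

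\textbf{Step 3: the dynamic program.} The states are tuples $(p_1,\dots,p_t)$ with $0\le p_c\le |C_c|$, where $p_c$ is the index of the last chosen vertex of class $c$ and $0$ means none; there are at most $O(n^t)$ states. A transition from a state chooses a class $c$ and an index $q>p_c$ such that $x_{c'}\to c_q$ for all $c'$ with $p_{c'}>0$, and moves to the state with $p_c$ replaced by $q$, gaining weight $w(c_q)$.

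Every transition strictly increases $\sum_c p_c$, so the state graph is acyclic. A maximum-weight path from $(0,\dots,0)$ can therefore be found by processing states in order of $\sum_c p_c$. By Step 2, such paths correspond exactly to transitive subtournaments in topological order, and the weight of a path is the weight of the subtournament.

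Each state has at most $n$ candidate next vertices, and each candidate costs $O(t)$ arc checks. This gives total time $O(tn^{t+1})$, and standard back-pointers recover $S$ and hence the feedback vertex set.

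The main obstacle, and the step I would verify most carefully, is the sufficiency direction in Step 2. It is where independence in $H(T)$ is genuinely used, through the nesting property. Some care is needed in the case $y=v$, which cannot occur since $v$ is new, and in confirming that the orientation of arcs within a class forces the class order inside $S$.
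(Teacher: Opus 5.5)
Your proof is correct and is essentially the paper's argument. Your nesting property (for $i<j$ in one class, $N^+(c_j)\setminus\{c_i\}\subseteq N^+(c_i)$) is the paper's prefix lemma (\Cref{lem:prefix}) read from the other side, and your dynamic program runs over the same $\prod_i(m_i+1)$ tuples of class indices with $O(t)$ work per candidate vertex. The only difference is the direction: you append vertices in topological order and record the last chosen vertex of each class, whereas the paper removes the last vertex $v$ and records, via $\mathbf q^{\mathbf a}(v)$, the prefix of each class that is still available.
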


The perfectness of $H(T)$ allows us to compute a proper $t$-coloring of $H(T)$ efficiently.
This in turn enables us to solve the node-weighted {\sc Directed Feedback Vertex Set} problem exactly on
$\T_{2k+1}$-free tournaments.
\begin{corollary}
\label{cor:poly-DFVS-Tk-free}
  For every $\T_{2k+1}$-free weighted tournament $(T,w)$ on $n$ nodes, a minimum-weight directed feedback vertex set of $(T,w)$ can be computed in polynomial time. 
\end{corollary}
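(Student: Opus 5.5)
The corollary follows by chaining \Cref{thm:t(k)-bound-chi(H)} and \Cref{thm:dynamic-program}; the only extra ingredient is an efficient coloring routine. Fix $k$ and let $(T,w)$ be a $\T_{2k+1}$-free weighted tournament on $n$ nodes.

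First, compute the triangle digraph $H(T)$. For each arc $uv$ of $T$, test whether some node $x$ has $vx$ and $xu$ as arcs. This takes $O(n^3)$ time in total and produces $H(T)$ together with its underlying undirected graph $G$.

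Second, compute a proper coloring of $G$ with few colors. By the result of Ghorbani and Mnich~\cite{GhorbaniM2026} recalled above, $G$ is perfect. Hence an optimal coloring, with exactly $\chi(G)$ colors, can be computed in polynomial time by the Grötschel--Lovász--Schrijver algorithm based on the Lovász theta function. \Cref{thm:t(k)-bound-chi(H)} gives $\chi(G)\le t_k$, so the coloring uses $t\le t_k$ colors.

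If one wants to avoid the ellipsoid method, an alternative exploits the fact that $t_k$ is a constant. Since $\chi(G)=\omega(G)\le t_k$, every clique of $G$ has at most $t_k$ nodes, so all cliques can be enumerated in $O(n^{t_k})$ time. One could then try a combinatorial coloring for perfect graphs of bounded clique number. I expect this optional refinement to be the only delicate point, and I would simply rely on the GLS algorithm.

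Third, run the dynamic program of \Cref{thm:dynamic-program} on $(T,w)$ with this proper $t$-coloring of $H(T)$. It returns a minimum-weight directed feedback vertex set in time $O(tn^{t+1})\le O(t_kn^{t_k+1})$, which is polynomial because $t_k$ depends only on the fixed $k$.

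Summing the three stages gives a polynomial running time, with the exponent depending on $k$. Correctness is inherited directly from \Cref{thm:dynamic-program}. Apart from the choice of coloring routine there is no real obstacle: all of the substantive work lies in \Cref{thm:t(k)-bound-chi(H)} and \Cref{thm:dynamic-program}, which we assume.
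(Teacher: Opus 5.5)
Your proposal is correct and follows the paper's proof: \Cref{thm:t(k)-bound-chi(H)} bounds the chromatic number of $H(T)$ by $t_k$, perfectness of $H(T)$ lets the Gr\"otschel--Lov\'asz--Schrijver algorithm find an optimal coloring in polynomial time, and \Cref{thm:dynamic-program} then returns a minimum-weight directed feedback vertex set in time $O(tn^{t+1})$. For the ellipsoid-free variant you left open, the paper uses a different route: $H(T)$ is a cocomparability graph, so an optimal coloring is a minimum chain decomposition of the complementary poset, which Dilworth's theorem lets one compute via bipartite matching.
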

Note that this corollary alone improves and simplifies a number of previous works.
For instance, Cai, Deng and Zang~\cite{CaiDZ2001} gave a very intricate 10-page analysis to obtain a polynomial-time algorithm for {\sc Directed Feedback Vertex Set} in $\mathcal T_5$-free tournaments.
The special case of $k=2$ of \Cref{cor:poly-DFVS-Tk-free} recovers a simple algorithm for the same problem.
Also, for $\mathcal T_7$-free tournaments, Mnich, Vassilevska Williams and V{\'e}gh~\cite{MnichVWV2016} gave a polynomial-time $\nicefrac{7}{3}$-approximation, based on a complex layering algorithm.
A simplified layering algorithm was given by Aprile, Drescher, Fiorini and Huynh~\cite{AprileDFH2023}, which also yields a $\nicefrac{7}{3}$-approximation in $\mathcal T_7$-free tournaments.
Both were improved to a polynomial-time $2$-approximation by Ghorbani and Mnich~\cite{GhorbaniM2026}.
As a special case with $k=3$, \Cref{cor:poly-DFVS-Tk-free} improves upon all of these results by providing an \emph{exact} polynomial-time algorithm for $\mathcal T_7$-free tournaments (without requiring any layering algorithm).

Finally, given $\varepsilon>0$, we choose an integer $k\ge 1/\varepsilon$ and combine this exact algorithm with a local-ratio preprocessing procedure that repeatedly identifies and removes weighted copies of subtournaments from $\T_{2k+1}$, yielding a deterministic $(2+\e)$-approximation.
This local-ratio procedure resembles similar procedures that were used in the $\nicefrac{5}{2}$-approximation~\cite{CaiDZ2001} 
to all values of $k$.
In fact, the procedure for the $\nicefrac{5}{2}$-approximation was highlighted as an important application of the local-ratio paradigm in the influential survey by Bar-Yehuda, Bendel, Freund and Rawitz~\cite{BarYehudaBFR2004}.

\subsection{Related work}
As mentioned, currently there is no full characterization known for which classes of 3-uniform hypergraphs the {\sc 3-Hitting Set} problem admits a polynomial-time $\alpha$-approximation algorithm for some absolute constant $\alpha < 3$.

On the one hand, there is a related {\sc Vertex Cover}-hard problem which clearly illustrates the challenge to obtain such approximation algorithms: {\sc Cluster Vertex Deletion}.
There, the goal is to hit all 3-node 2-edge induced paths in a given undirected graph with a minimum number of node deletions.
Its history of approximation algorithms almost parallels that for {\sc Directed Feedback Vertex Set} in tournaments: first a $\nicefrac{5}{2}$-approximation~\cite{YouWC2017}, then a $\nicefrac{7}{3}$-approximation~\cite{FioriniJS2016}, then a $\nicefrac{9}{4}$-approximation~\cite{FioriniJS2020}, until eventually a 2-approximation was obtained~\cite{AprileDFH2023b}.

On the other hand, for another related problem the existence of non-trivial approximation algorithms seems unlikely.
Namely, the problem of hitting undirected triangles in undirected graphs by a minimum number of nodes cannot have a $(3-\varepsilon)$-approximation for any $\varepsilon > 0$ assuming the truth of the Unique Games Conjecture~\cite{GuruswamiL2014}.

The structure of directed feedback vertex sets in tournaments has also been well-investigated from three related algorithmic concepts.
The first one of those are exact exponential-time algorithms; the asymptotically fastest one computes a minimum directed feedback vertex set of any $n$-node tournament in time $O(1.466^n)$~\cite{LokshtanovK2016}.
The second one are fixed-parameter algorithms, which aim to decide whether a given $n$-node tournament admits a directed feedback vertex set of size $k$; here the algorithm with the most graceful dependence on $k$ requires time $1.618^k\cdot n^{O(1)}$~\cite{LokshtanovK2016}.
And third, enumeration algorithms which list all inclusion-minimal directed feedback vertex sets of a given $n$-node tournament; this time is known to be bounded by $O(1.595^n)$~\cite{MnichT2018}.
 
\noindent
\textbf{Organization.}
The rest of the paper is organized as follows.
In \Cref{sec:trianglegraph}, we collect the main properties of the triangle digraph of a tournament, following Ghorbani and Mnich~\cite{GhorbaniM2026}, and prove that, for fixed $k$, the triangle digraph of every $\T_{2k+1}$-free tournament has bounded chromatic number, thus establishing \Cref{thm:t(k)-bound-chi(H)}. 
In \Cref{sec:dynamic-programming}, we solve the node-weighted {\sc Directed Feedback Vertex Set} problem on $\T_{2k+1}$-free tournaments via dynamic programming, and prove  \Cref{thm:dynamic-program} and \Cref{cor:poly-DFVS-Tk-free}. 
In \Cref{sec:2+epsilon-appx}, we give a weighted local-ratio algorithm that yields a deterministic $(2+\e)$-approximation for node-weighted {\sc Directed Feedback Vertex Set}, and prove \Cref{thm:2+epsilon-tournament} and \Cref{cor:dfvs-appx-quasi-transitive}.
Finally, in \Cref{sec:runtime-improve}, we show how to improve the run time of our 
 $(2+\varepsilon)$-approximation algorithm.

\section{The triangle digraph of tournaments}
\label{sec:trianglegraph}
As part of their $\nicefrac{9}{4}$-approximation for {\sc Directed Feedback Vertex Set} in quasi-transitive digraphs, Ghorbani and Mnich~\cite{GhorbaniM2026} studied the triangle digraph of a tournament and analyzed its properties in relation to the {\sc Directed Feedback Vertex Set} problem.
In particular, they showed that the triangle digraph of a tournament is \emph{perfect} (that is, in every induced subdigraph, the chromatic number equals the clique number, where for a digraph, the clique number and chromatic number refer to those of its underlying undirected graph, see the digraphs book by Bang-Jensen and Gutin~\cite[Section~11.7]{BangJensenG2018}).
In the following, we build upon their concept and utilize the properties of the triangle digraph.
As the key ingredient of our approach, we prove that considering tournaments not containing a subtournament with a relatively large minimum directed feedback vertex set forces their triangle digraph to have bounded clique number.
	
For a digraph $D$, let $V(D)$ be its node set and let $A(D)$ be its arc set, and write $u \to v$ to indicate that $(u,v)\in A(D)$.
For any node set $S \subseteq V(D)$, let $D[S]$ denote the subdigraph of~$D$ induced by $S$.
For a digraph $D$, let $\mathsf{mfvs}(D)$ denote the minimum size of a directed feedback vertex set of~$D$.
And for a digraph $D$ whose nodes are weighted by some function $w:V(D)\rightarrow\mathbb{Q}_{\ge 0}$, let $\mathsf{mfvs}(D,w)$ denote the minimum weight of a directed feedback vertex set of $D$.
Throughout, \emph{cycle} and \emph{triangle} mean directed cycle and directed triangle, respectively.
	
The objective of the {\sc Directed Feedback Vertex Set} problem is to intersect all directed cycles by a minimum-sized set of nodes.
Recall the well-known fact that in tournaments it suffices to hit all directed triangles in order to hit all directed cycles.
	
Another simple but useful observation is that any directed feedback vertex set of~$H(T)$ is a directed feedback vertex set of $T$.
  Thus, to find a directed feedback vertex set of $T$, it suffices to focus on the subdigraph $H(T)$ of $T$.

\begin{proposition}[\cite{GhorbaniM2026}]
\label{lem:H(D)perfect}
  For any tournament $T$, its triangle digraph $H(T)$ is a perfect digraph. 
\end{proposition}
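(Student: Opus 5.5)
We prove the statement in the sense the paper fixes in its parenthetical definition: the clique number and chromatic number of $H(T)$ are those of its underlying undirected graph $G$. So the claim is that $G$ is a perfect graph. We do not invoke the Strong Perfect Graph Theorem. Instead we show that $G$ is an \emph{incomparability graph}, that is, the complement of a comparability graph; such graphs are perfect by Dilworth's theorem.

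\textbf{Step 1: characterize the non-edges of $G$.} Let $u,v$ be distinct nodes of $T$, with say $u\to v$. The arc $uv$ is not in $A(H(T))$ exactly when there is no $w$ with $v\to w\to u$. Since $T$ is a tournament, this says: for every $w\neq u,v$, if $v\to w$ then $u\to w$. Write $u\Rightarrow v$ for ``$u\to v$ and $N^+(v)\subseteq N^+(u)\cup\{\,\}$'', more precisely $N^+(v)\setminus\{u\}\subseteq N^+(u)$. Then $uv\notin E(G)$ if and only if $u\Rightarrow v$ or $v\Rightarrow u$. The relation $\Rightarrow$ is irreflexive, and it is antisymmetric because $T$ has no digons.

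\textbf{Step 2: $\Rightarrow$ is transitive.} Suppose $u\Rightarrow v$ and $v\Rightarrow x$.
\begin{itemize}
\item From $v\to x$ and $u\Rightarrow v$ we get $u\to x$; here $x\neq u$, since $u\to v$ and $v\to x$ rule out $x=u$.
\item Let $x\to w$ with $w\neq u$. Then $v\Rightarrow x$ gives $v\to w$, or $w=v$. If $w=v$, then $u\to v=w$. Otherwise $u\Rightarrow v$ gives $u\to w$.
\end{itemize}
Hence $N^+(x)\setminus\{u\}\subseteq N^+(u)$, so $u\Rightarrow x$. Therefore $\Rightarrow$ is a strict partial order $P$ on $V(T)$. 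By Step 1, $G$ is precisely the incomparability graph of $P$.

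\textbf{Step 3: conclude perfectness.} For $S\subseteq V(T)$, the induced subgraph $G[S]$ is the incomparability graph of the subposet $P[S]$.
\begin{itemize}
\item Cliques of $G[S]$ are antichains of $P[S]$.
\item Independent sets of $G[S]$ are chains of $P[S]$, so a proper coloring of $G[S]$ is a partition of $S$ into chains.
\end{itemize}
Dilworth's theorem states that the minimum number of chains in such a partition equals the maximum size of an antichain. Hence $\chi(G[S])=\omega(G[S])$ for every $S$, and $G$ is perfect.

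As a side remark, a chain partition, and hence an optimal coloring, is computable via bipartite matching. This is relevant later in the paper, where a proper $t$-coloring of $H(T)$ must be computed efficiently.

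The only step needing care is the transitivity check in Step 2, specifically the boundary cases $w=u$ and $w=v$. Everything else is standard.
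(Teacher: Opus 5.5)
Your proof is correct and follows the route the paper relies on. The paper does not reprove this proposition but cites Ghorbani and Mnich, and its later remark shows the argument is that $H(T)$ is a cocomparability graph, so perfectness (and efficient optimal coloring) follows from Dilworth's theorem. You additionally make the underlying partial order explicit, namely $u\Rightarrow v$ if and only if $u\to v$ and $N^+(v)\subseteq N^+(u)$, and your transitivity check for it is sound.
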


Recall that an acyclic tournament is also called transitive, since its arc set induces a transitive relation.
Every transitive tournament has a unique structure: its nodes admit a unique total ordering.
We denote by $\TT_n$ the transitive tournament on $n$ nodes.
	
\begin{proposition}[\cite{GhorbaniM2026}]
\label{lem:2^{k-1}+1}
	Let $k \geq 2$, and let $T$ be a tournament such that $H(T)$ contains~$\TT_{2^{k-1}+1}$.
	Then~$T$ contains a subtournament from $\T_{2k+1}$.
\end{proposition}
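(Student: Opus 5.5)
We argue by induction on $k$, building the $\T_{2k+1}$-subtournament explicitly from the transitive subtournament of $H(T)$. Concretely, we aim for the following stronger statement: if $X\subseteq V(T)$ induces a transitive subtournament of $T$ with $|X| = 2^{k-1}+1$, and every arc of $T[X]$ lies in some directed triangle of $T$, then there is a set $Y\supseteq$ (a suitable subset of $X$) with $|Y|=2k+1$ and $\mathsf{mfvs}(T[Y])\ge k$. Note that "$H(T)$ contains $\TT_{2^{k-1}+1}$" means exactly that such an $X$ exists, since $H(T)$ is a subdigraph of the tournament $T$, so any $\TT_m$ in $H(T)$ is transitive in $T$ with all its arcs in triangles.

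For the base case $k=2$ we have $|X|=3$, say $x_1\to x_2\to x_3$ and $x_1\to x_3$. The arc $x_1x_3$ lies in a triangle $x_1\to x_3\to y\to x_1$. Together with $x_2$, the set $\{x_1,x_2,x_3,y\}$ plus one more witness gives five nodes. Take a witness $z$ for the arc $x_1x_2$ (so $x_2\to z\to x_1$) and consider $\{x_1,x_2,x_3,y,z\}$. We check by a short case analysis on the arcs between $x_2,y,z$ and $x_3$ that no single node hits all triangles, so $\mathsf{mfvs}\ge 2$. If $y=z$ or coincidences occur, we pick a different witness, for instance the one for $x_2x_3$.

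For the inductive step, write the transitive order $x_1\to\cdots\to x_m$ with $m=2^{k-1}+1$. Split $X$ into the first half $X_1=\{x_1,\dots,x_{2^{k-2}+1}\}$ and the second half $X_2=\{x_{2^{k-2}+1},\dots,x_m\}$, which share the middle node and each have size $2^{k-2}+1$. By induction, either half yields a $\T_{2k-1}$-subtournament $Y'$. We then add two further nodes, namely the extreme node $x_1$ or $x_m$ (whichever lies outside the half used) and a triangle witness $y$ for the arc $x_1x_m$. We then argue that $\mathsf{mfvs}$ increases by one. Suppose a set $F$ of size $k-1$ hits all triangles of $T[Y'\cup\{x_1,y\}]$. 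Then $F$ must contain at least $k-1$ nodes of $Y'$, so $F\subseteq Y'$ and $F$ misses both $x_1$ and $y$. This leaves a triangle through $x_1,y$ and a surviving node of $Y'$. To guarantee that such a triangle exists, we use the fact that $x_1$ dominates all of $X$, so it suffices for $y$ to be dominated by a surviving node of $Y'\cap X$.

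The main obstacle is controlling the arcs between $y$ and the rest of $Y'$. The doubling $2^{k-1}$ gives room to choose which half to recurse on. If $y$ has an in-neighbour in the second half $X_2$, we recurse on the half that keeps enough such nodes; otherwise $y\to X_2$ and $x_m\to y$, which contradicts the choice of $y$ as a witness for $x_1x_m$ unless an alternative configuration holds. Pigeonhole on the halves handles this. The loss of a factor of 2 per level is exactly the price of this choice, which explains the $2^{k-1}$ bound. I expect the delicate case analysis making sure $F$ cannot contain $y$ to be the hardest part. If a direct argument fails, I would strengthen the inductive hypothesis to require that $Y'$ contains both endpoints of its transitive block.
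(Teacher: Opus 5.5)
The paper imports this proposition from the earlier Ghorbani--Mnich paper without reproducing a proof, so I judge your argument on its own. Your overall architecture is reasonable: induction on $k$, adding an extreme node and the witness $y$ of $x_1x_m$, plus a pigeonhole step. However, there are genuine gaps in both the base case and the inductive step.

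\textbf{Base case.} As stated, it is false. Take $x_2\to y$, $x_3\to z$ and $y\to z$. This involves no coincidences and is compatible with $y$ witnessing $x_1x_3$ and $z$ witnessing $x_1x_2$. Then $\{x_2,x_3,y,z\}$ is transitive in the order $x_2,x_3,y,z$, so $\{x_1\}$ hits every triangle of $T[\{x_1,x_2,x_3,y,z\}]$, and this subtournament is not in $\T_5$. The second witness must be chosen according to the arc between $x_2$ and $y$. If $x_2\to y$, take a witness $w$ of $x_2x_3$; the triangles $x_1x_2y$, $x_1x_3y$, $x_2x_3w$ have no common node. If $y\to x_2$, take the witness of $x_1x_2$.

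\textbf{Inductive step: the survival property.} The sentence ``this leaves a triangle through $x_1$, $y$ and a surviving node of $Y'$'' needs a node $a\in Y'$ with $x_1\to a\to y$ that lies outside \emph{every} $(k-1)$-element feedback vertex set of $T[Y']$. That $F$ avoids $y$ is automatic from counting (once $x_1,y\notin Y'$), so survival of $a$, not exclusion of $y$, is the real issue. Knowing only $T[Y']\in\T_{2k-1}$ says nothing about which nodes a minimum feedback vertex set of $T[Y']$ uses. Your fallback of keeping the endpoints of the block in $Y'$ does not help either, since a minimum feedback vertex set may contain them.

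\textbf{Inductive step: the split.} The positional halves are the wrong partition. Their nodes are oriented arbitrarily towards $y$, so a surviving node of the half need not form a triangle with $x_1$ (or $x_m$) and $y$. Also, $Y'$ may contain $y$ (e.g.\ if $y$ witnesses an arc inside the half), leaving only $2k$ nodes. Your alternative ``otherwise $y\to X_2$'' is vacuous, because $x_m\to y$ always holds.

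\textbf{What is missing.} First, split $X$ into $A=\{x\in X:x\to y\}$ and $B=\{x\in X:y\to x\}$; one of them has at least $2^{k-2}+1$ nodes. Second, carry the following invariant, where $W(X)$ is the set of witnesses of arcs of $T[X]$: $Y\subseteq X\cup W(X)$, $T[Y]$ has no transitive subtournament on $k+2$ nodes, and every transitive subtournament of $T[Y]$ on $k+1$ nodes meets $X$. The last condition is equivalent to saying every $k$-element feedback vertex set of $T[Y]$ misses a node of $X\cap Y$, which is exactly the survival property you need one level up.

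Here is how the step then goes. Recurse on some $X'\subseteq A$ of size $2^{k-2}+1$ and set $Y=Y'\cup\{x_1,y\}$. Neither $x_1$ (which dominates $X'$) nor $y$ (which is dominated by $X'$) can witness an arc of $T[X']$, so $|Y|=2k+1$. A transitive $(k+2)$-subset of $Y$ would have to contain $x_1$, $y$ and a transitive $k$-subset of $Y'$. That subset meets $X'$ in some $a$, giving the triangle $x_1\to a\to y\to x_1$, a contradiction. The invariant passes to $Y$ by the same counting. The case of $B$ is symmetric, with $x_m$ in place of $x_1$. The induction can start at $k=1$ from a single triangle.
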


Based on these two propositions, we can prove that for every $k\ge1$, there exists a constant $t=t_k$ such that, for every $\T_{2k+1}$-free tournament $T$, the triangle digraph $H(T)$ is properly $t$-colorable.
That is, there exists a function that assigns colors from $\{1,\hdots,t\}$ to the nodes of $H(T)$ such that any two adjacent nodes receive distinct colors. 
Note that here we crucially refer to the node coloring of the (underlying undirected) graph of $H(T)$.

\theoremstyle{plain}\newtheorem*{re-t(k)-bound-chi(H)}{\Cref{thm:t(k)-bound-chi(H)} (restated)}
\begin{re-t(k)-bound-chi(H)}
Let $k\ge 1$, and let $T$ be a $\T_{2k+1}$-free tournament. 
Then $H(T)$ admits a proper $t$-coloring where $t=t_k\coloneqq2^{2^{k-1}}-1$.	
\end{re-t(k)-bound-chi(H)}
\begin{proof}
  Let $\omega(H(T))$ be the clique number of $H(T)$.
  Since $H(T)$ is perfect (by \Cref{lem:H(D)perfect}), its chromatic number equals its clique number~\cite{Lovasz1972}.
  Thus, it suffices to prove that \mbox{$\omega(H(T))\le 2^{2^{k-1}}-1$}.
			
  Set	$r\coloneqq2^{k-1}+1$.
  We use the Erd\H{o}s--Moser bound~\cite{ErdosMoser1964} for tournaments, which states that every tournament on at least $2^{r-1}$ nodes contains the transitive subtournament $\TT_r$.
	
  Suppose, for sake of contradiction, that $\omega(H(T))\ge 2^{2^{k-1}}$.
  Since $r=2^{k-1}+1$, this means $\omega(H(T))\ge 2^{r-1}$.
  Hence, $H(T)$ contains a clique $Q$ with $|Q|\ge 2^{r-1}$.
  Since $T[Q]$ is a tournament on at least~$2^{r-1}$ nodes, the Erd\H{o}s--Moser bound implies that $T[Q]$ contains a $\TT_r$. 
  Because $Q$ is a clique of $H(T)$, this~$\TT_r$ is contained in $H(T)$.	
  By \Cref{lem:2^{k-1}+1}, this implies that $T$ contains a tournament from~$\T_{2k+1}$.	
  This contradicts the assumption that $T$ is $\T_{2k+1}$-free, and proves the theorem.
\end{proof}

\section{A dynamic programming algorithm for node-weighted directed feedback vertex set on colored triangle digraphs}
\label{sec:dynamic-programming}
Let $(T,w)$ be a node-weighted tournament, and let $C_1,\ldots,C_t$ be the color classes of a proper $t$-coloring of $H(T)$.
Since $V(H(T))=V(T)$, the sets $C_1,\ldots,C_t$ form a partition of $V(T)$.
Moreover, each $C_i$ is an independent set in $H(T)$, so no two nodes of $C_i$ belong to a common directed triangle of~$T$.
Consequently, the subtournament $T[C_i]$ contains no directed triangle and is therefore transitive.

\subsection{Algorithm description}
The following lemma is a key ingredient of our approach.
\begin{lemma}
\label{lem:prefix}
  Let $C$ be an independent set in $H(T)$ and let $c_{1},c_{2},\dots,c_{m}$ be the transitive order of the nodes of $T[C]$, so that $c_{p}\to c_{q}$ whenever $p<q$.
  Then for every node $v\notin C$, the set	
  \begin{equation*}
    N^-_{C}(v)=\big\{c_j\in C: (c_j, v)\in A(T)\big\}
  \end{equation*}
  is a prefix of the ordering of $T[C]$.
  Equivalently, there is an integer $p(v)\in\{0,1,\dots,m\}$ such that	
  \begin{equation*}
    (c_j, v)\in A(T) \quad\hbox{if and only if}\quad j\le p(v) \enspace .
  \end{equation*}
\end{lemma}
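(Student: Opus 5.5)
It suffices to show that $N^-_C(v)$ is closed downward in the transitive order: if $c_q\to v$ and $p<q$, then $c_p\to v$. A downward-closed subset of a linear order is exactly a prefix $\{c_1,\dots,c_{p(v)}\}$, where $p(v)$ is the largest index $j$ with $c_j\to v$ (or $p(v)=0$ if there is none). The two formulations in the statement are then identical.

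Assume for contradiction that $p<q$, $c_q\to v$, but $c_p\not\to v$. Since $T$ is a tournament and $v\neq c_p$, we get $v\to c_p$. Together with $c_p\to c_q$, which holds because $p<q$ in the transitive order of $T[C]$, this gives the directed triangle $c_p\to c_q\to v\to c_p$ in $T$. Its arc $(c_p,c_q)$ lies on a directed triangle, so it belongs to $H(T)$. Then $c_p$ and $c_q$ are adjacent in $H(T)$, contradicting the independence of $C$ in $H(T)$.

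There is no real obstacle here. The only points to check are that $v\notin C$, so that $v$, $c_p$, $c_q$ are three distinct nodes forming a genuine triangle, and that adjacency in $H(T)$ refers to the underlying undirected graph, so independence of $C$ forbids any triangle arc between two of its nodes. The contrapositive also shows that the out-neighbourhood of $v$ in $C$ is a suffix, which is the form the dynamic program will use.
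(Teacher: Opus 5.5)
Your proposal is correct and is essentially the paper's proof. Both arguments take indices $p<q$ with $v\to c_p$ and $c_q\to v$, form the directed triangle $c_p\to c_q\to v\to c_p$, and conclude that $c_p,c_q$ are adjacent in $H(T)$, which contradicts the independence of $C$.

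One minor correction to your closing aside: the suffix statement for out-neighbours follows by complementation in $C$, not by contrapositive, and the dynamic program uses the prefix form for in-neighbours.
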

\begin{proof}
  Suppose, for sake of contradiction, that $N^-_{C}(v)$ is not a prefix of the transitive ordering of~$T[C]$.
  Then there are indices $a,b$ with $a<b$ such that	$v\to c_{a}$ and	$c_{b}\to v$.
  Since $a<b$ and~$C$ is ordered transitively, $c_{a}\to c_{b}$.
  Therefore, $c_{a}\to c_{b}\to v\to c_{a}$	is a directed triangle in $T$.
  It follows that $\{c_{a},c_{b}\}$ must be an edge of $H(T)$.
  But both $c_{a}$ and $c_{b}$ belong to the independent set $C$ in $H(T)$---a contradiction.	
  Thus, $N^-_{C}(v)$ must be a prefix of the ordering of $C$.
\end{proof}

Now consider a transitive subtournament $U$ of $T$, and suppose that $v$ is the last node in the transitive ordering of $U$.
Then every other node of $U$ must point into $v$.
Hence, by \Cref{lem:prefix}, for each color class $C_i$, the set $U\cap C_i$ is contained in a prefix of the ordering of $C_i$.
This imposes a strong structural constraint on $U$: once the last node is fixed, the remaining nodes of~$U$ can only be chosen from prefixes of the color classes.
Consequently, it suffices to consider subtournaments induced by unions of such prefixes.
We now introduce the notation needed to formalize this idea.
 
Suppose that the transitive order of each $C_i$ is $C_i=\{c_{i,1},c_{i,2},\dots,c_{i,m_i}\}$, where $c_{i,p}\to c_{i,q}$ whenever $p<q$.
For every color class $C_i$ and every node $v\in V(T)$, let $p_i(v)$ be the length of the prefix of $C_i$ that points into $v$.
By \Cref{lem:prefix}, this prefix is well-defined.
More precisely,
\begin{equation*}
  p_i(v)\coloneqq\max\big\{r\in\{0,1,\dots,m_i\}: c_{i,j}\to v\text{ for }j = 1,\hdots,r\big\} \enspace .
\end{equation*}
Note that if $v=c_{i,j}\in C_i$, then $p_i(v)=j-1$.

For vectors $\ba=(a_1,\dots,a_t)$ satisfying $0\le a_i\le m_i$ (these vectors constitute the states of our dynamic program), define
\begin{equation*}
  B(\mathbf a)\coloneqq\bigcup_{i=1}^t\{c_{i,1},c_{i,2},\dots,c_{i,a_i}\} \enspace .
\end{equation*}
Let $M(\ba)$ denote the maximum weight of a transitive subtournament contained in $T[B(\ba)]$, and set $M(\mathbf0)=0$.
Then $M(\mathbf m)$ is the maximum weight of a transitive subtournament of $T$, where 
\begin{equation*}
  \mathbf m\coloneqq(m_1,\dots,m_t) \enspace .
\end{equation*}
For every state  $\ba=(a_1,\dots,a_t)$ and a node $v\in B(\ba)$, define
\begin{equation*}
  \bq^{\ba}(v)\coloneqq(q_1,\dots,q_t)
\end{equation*}
by
\begin{equation*}
  q_i\coloneqq\min\{a_i,p_i(v)\}\qquad\text{for every }i=1,\dots,t \enspace .
\end{equation*}

The following lemma allows us to compute $M(\ba)$ recursively for every state $\ba$, and in particular to compute the maximum weight of a transitive subtournament of $T$.
\begin{lemma}
  For every state $\ba=(a_1,\dots,a_t)$ it holds
  \begin{equation}
  \label{eq:recursion}
     M(\ba)	=\max_{v\in B(\ba)}	\bigl(	w(v)+M(\bq^{\ba}(v))\bigr) \enspace .
  \end{equation}
\end{lemma}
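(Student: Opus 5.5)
We prove the recursion by establishing the two inequalities separately, treating $\ba=\mathbf 0$ (where $B(\ba)=\emptyset$ and the maximum over the empty set is read as $0$) as the base case. Throughout we use that a transitive subtournament $U$ has a unique last node $v$, into which every other node of $U$ points, and that removing $v$ leaves a transitive subtournament. Note also that $\bq^{\ba}(v)\le \ba$ componentwise, so $M(\bq^{\ba}(v))$ is a well-defined state value.

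\emph{Upper bound ($\le$).} Let $U\subseteq B(\ba)$ induce a transitive subtournament of maximum weight $M(\ba)$. If $U=\emptyset$ the bound is trivial, since all terms on the right are non-negative. Otherwise let $v$ be the last node of $U$ and set $U'=U\setminus\{v\}$. We claim $U'\subseteq B(\bq^{\ba}(v))$. Take $u\in U'$ with $u=c_{i,j}$. Since $u\in B(\ba)$, we have $j\le a_i$. Since $u\to v$, \Cref{lem:prefix} gives $j\le p_i(v)$. This application needs $v\notin C_i$; if $v\in C_i$, then $u\to v$ inside the transitive order of $C_i$ forces $j<$ the index of $v$, so $j\le p_i(v)$ by the note that $p_i(c_{i,j'})=j'-1$. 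Hence $j\le q_i$, proving the claim. As $U'$ is transitive, $w(U')\le M(\bq^{\ba}(v))$, and therefore $M(\ba)=w(v)+w(U')\le w(v)+M(\bq^{\ba}(v))$.

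\emph{Lower bound ($\ge$).} Fix $v\in B(\ba)$, and let $U'\subseteq B(\bq^{\ba}(v))$ be a transitive subtournament of weight $M(\bq^{\ba}(v))$. Every $u=c_{i,j}\in U'$ has $j\le p_i(v)$, so $u\to v$ by the definition of $p_i$. In particular $v\notin U'$, because $p_i(c_{i,j'})=j'-1$ excludes $v$ itself from its own prefix. Thus $U'\cup\{v\}$ is transitive, with $v$ as a sink appended to a transitive tournament. It lies in $B(\ba)$ because $\bq^{\ba}(v)\le\ba$ and $v\in B(\ba)$. Hence $M(\ba)\ge w(v)+M(\bq^{\ba}(v))$.

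The main subtlety is the case $v\in C_i$ in the upper bound, which \Cref{lem:prefix} does not cover (it assumes $v\notin C$). It is handled directly by the transitive ordering of $C_i$ together with the identity $p_i(c_{i,j})=j-1$. One should also check that $\bq^{\ba}(v)$ is strictly smaller than $\ba$ in some coordinate (namely, $v$'s own class), which is what makes the recursion well-founded for the dynamic program. This does not affect the validity of the identity itself.
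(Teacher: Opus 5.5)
Your proof is correct and follows the same two-inequality argument as the paper. For the upper bound, you remove the last node $v$ of an optimal transitive set to obtain a transitive set inside $B(\bq^{\ba}(v))$; for the lower bound, you append $v$ as a sink to an optimal transitive set in $B(\bq^{\ba}(v))$. Beyond the paper, you explicitly handle the case $v\in C_i$, which \Cref{lem:prefix} does not cover, and you check that $v\notin U'$. You also rightly note that no induction hypothesis is needed, since both directions use only the definition of $M$; the paper's induction is never actually invoked.
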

\begin{proof}
  We prove the assertion by induction on
  \begin{equation*}
	  |\ba|\coloneqq a_1+\cdots+a_t \enspace .
  \end{equation*}
  If $\ba=\mathbf0$, then $B(\ba)=\emptyset$, and the maximum weight of a transitive subtournament contained in $T[B(\ba)]$ is $0$.
  Hence, the recurrence is correct in the base case.	
  Now suppose that $\ba\neq \mathbf0$, and assume the recurrence is correct for all states $\mathbf b$ with $|\mathbf b|<|\ba|$.
	
  Let $S\subseteq B(\ba)$ be a maximum-weight transitive subtournament of $T[B(\ba)]$.	
  Let $v$ be the last node of $S$ in its transitive ordering.
  Then every node of $S\setminus\{v\}$ points to $v$.
  By \Cref{lem:prefix}, for each color class $C_i$, the nodes of $C_i$ that point into $v$ form a prefix of the transitive ordering of $C_i$ with length $p_i(v)$.
  Since $S\subseteq B(\ba)$, the nodes of $S\setminus\{v\}$ belonging to $C_i$ must lie among $c_{i,1},c_{i,2},\dots,c_{i,q_i}$ where $q_i=\min\{a_i,p_i(v)\}$.
  Therefore, $S\setminus\{v\}\subseteq B(\bq^{\ba}(v))$.
  Hence,
  \begin{equation*}
	  w(S)=w(v)+w(S\setminus\{v\})\le w(v)+M(\bq^{\ba}(v)) \enspace .
  \end{equation*}
  Since $S$ was optimal, this shows that $M(\ba)\le \max_{v\in B(\ba)} \bigl(w(v)+M(\bq^{\ba}(v))\bigr)$.
	
  Conversely, fix any node $v\in B(\ba)$, and let $S'$ be a maximum-weight transitive subtournament contained in $T[B(\bq^{\ba}(v))]$.
  By definition of $\bq^{\ba}(v)$, every node of $S'$ points to $v$.	
  Since $T[S']$ is transitive, appending $v$ as the last node gives a transitive subtournament $T[S'\cup\{v\}]$.
  Its weight is $w(S')+w(v) = M(\bq^{\ba}(v))+w(v)$.
  Therefore, every candidate value $w(v)+M(\bq^{\ba}(v))$	appearing in the recurrence is attainable by a transitive subtournament of $T[B(\ba)]$.
  Hence, $M(\ba)\ge \max_{v\in B(\ba)} \bigl(w(v)+M(\bq^{\ba}(v))\bigr)$.
\end{proof}

The recursion \eqref{eq:recursion} enables us to find a maximum-weight transitive subtournament, and hence a minimum-weight directed feedback vertex set, by dynamic programming.
For each state $\ba$, we store a node $v$ for which the maximum in \eqref{eq:recursion} is attained.
We call these nodes \emph{predecessor pointers}.
More precisely, when computing $M(\ba)$, we store a node
\begin{equation*}
  \operatorname{pred}(\ba)\in B(\ba)
\end{equation*}
that attains the maximum in the recurrence.
These pointers are then used to reconstruct an optimal transitive subtournament.
The formal description is given in \Cref{alg:triangle-graph-dp}.

\begin{algorithm}[h!]
	\caption{Exact dynamic program for \dfvs from a proper $t$-coloring of the triangle digraph}
	\label{alg:triangle-graph-dp}	
	\begin{algorithmic}[1]
		\Require A tournament $T$ with node weights $w:V(T)\to \mathbb{Q}_{\ge 0}$, and color classes $C_1,\ldots,C_t$ of a proper $t$-coloring of~$H(T)$.
		\Ensure A minimum-weight directed feedback vertex set $F$ of $(T,w)$.
		
		\For{$i=1,\dots,t$}
		\State Compute the transitive order of $T[C_i]$:
		$C_i=\{c_{i,1},c_{i,2},\dots,c_{i,m_i}\}$
		such that $c_{i,p}\to c_{i,q}$ whenever $p<q$.
		
		\For{each node $v\in V(T)$}
		\State Set
		$
		p_i(v)\gets
		\max\{r\in\{0,1,\dots,m_i\}: (c_{i,j}, v)\in A(T) \text{ for } j=1,\hdots,r\}.
		$
		\EndFor
		\EndFor
		
		
		\State Set $M(\mathbf0)\gets 0$. 
		
		\For{$s=1,\dots,n$}
		\For{all states $\ba=(a_1,\dots,a_t)$ with $a_1+\cdots+a_t=s$}
		\State Set $M(\ba)\gets 0$. 
		\State Define $B(\ba) = \bigcup_{i=1}^t \{c_{i,1},c_{i,2},\dots,c_{i,a_i}\}.$
		\For{each node $v\in B(\ba)$}
		\State Define $\bq^{\ba}(v)=(q_1,\dots,q_t)$ by $q_i\gets \min\{a_i,p_i(v)\}$ for every $i=1,\dots,t$.
		\State $\operatorname{value}\gets w(v)+M(\bq^{\ba}(v))$.
		\If{$\operatorname{value}>M(\ba)$}
		\State $M(\ba)\gets \operatorname{value}$.
		\State $\operatorname{pred}(\ba)\gets v$.
		\EndIf
		\EndFor
		\EndFor
		\EndFor
		
		\State $U\gets \emptyset$.
		\State $\ba\gets \mathbf m$.
		\While{$\ba\neq \bf0$}
		\State $v\gets \operatorname{pred}(\ba)$.
		\State $U\gets U\cup\{v\}$.
		\State  $\ba\gets\bq^{\ba}(v)$.
		\EndWhile
		
		\State $F\gets V(T)\setminus U$.
		\State \Return $F$.
	\end{algorithmic}
\end{algorithm}

\subsection{Algorithm analysis}
\theoremstyle{plain}\newtheorem*{re-dynamic-program}{\Cref{thm:dynamic-program} (restated)}
\begin{re-dynamic-program}
  Given a node-weighted tournament $(T,w)$ of order $n$ and a proper\linebreak $t$-coloring of $H(T)$, \Cref{alg:triangle-graph-dp} computes a minimum-weight directed feedback vertex set of $(T,w)$ in time $O(tn^{t+1})$.
\end{re-dynamic-program}
\begin{proof}
  We prove correctness of the algorithm, and then analyze its asymptotic run time.
  
  If $t\le2$, then $T$ contains no directed triangle, and so there is nothing to prove.
  Thus we assume that $t\ge3$.
	
  Minimizing the weight of a feedback vertex set of $(T,w)$ is equivalent to maximizing the weight of a transitive subtournament of $(T,w)$.
  Therefore, it is enough to show that the dynamic program computes a maximum-weight transitive subtournament $U$ of $T$.
	
  The recurrence \eqref{eq:recursion} computes $M(\ba)$ for every state $\ba$.
  It is well-defined in the sense that, if the states are processed in increasing order of $|\ba|$, then all entries needed to compute $M(\ba)$ have already been computed.
  To see this, suppose that $v=c_{h,j}\in B(\ba)$.
  Then $j\le a_h$, and by construction $p_h(v)=j-1$.
  Therefore $q_h=\min\{a_h,p_h(v)\}=j-1<a_h$.
  For every $i\neq h$, we have $q_i\le a_i$.
  Hence, $|\bq^{\ba}(v)|<|\ba|.$
	
  Therefore, after the  table is filled, $M(\mathbf m)$ is the maximum weight of a transitive subtournament of $T$.
  Following the stored predecessor pointers reconstructs such a maximum-weight transitive subtournament $U$.
  Then $F=V(T)\setminus U$ is a minimum-weight feedback vertex set of $T$.
	
  Now we analyze the run time.
  There are $\prod_{i=1}^t(m_i+1)$ states, because each coordinate $a_i$ can take one of $m_i+1$ values.	
  For each state $\ba$, the recurrence considers all nodes in $B(\ba)$.
  Since $|B(\ba)|\le n$, there are at most $n$ candidate last nodes.	
	For each candidate node $v$, computing $\bq^{\ba}(v)$ takes $O(t)$ time if done directly, because it requires computing $q_i=\min\{a_i,p_i(v)\}$ for every $i=1,\dots,t$.
  Thus, the total dynamic programming time is $O\left(t n\prod_{i=1}^t(m_i+1)\right)$.
	
  The preprocessing takes polynomial time.
  The triangle digraph $H(T)$ can be constructed na{\"i}vely in $O(n^3)$ time by checking every triple of nodes and marking the three pairs whenever the triple forms a directed triangle.
  Given the color classes, each $T[C_i]$ can be ordered transitively in $O(m_i^2)$ time, and therefore all color classes can be ordered in time $O\left(\sum_{i=1}^t m_i^2\right)\le O(n^2)$.
  The values $p_i(v)$ can be computed by scanning each ordered color class for each node.
  This takes $O\left(\sum_{i=1}^t n m_i\right)=O(n^2)$ time.
  Therefore, the overall run time is
  \begin{equation*}
	  O\left(n^3+t n\prod_{i=1}^t(m_i+1)\right)=O\left(n^3+t n\cdot n^t\right)=O(tn^{t+1}) \enspace .\qedhere
  \end{equation*}
\end{proof}

\begin{proof}[\bf Proof of \Cref{cor:poly-DFVS-Tk-free}]
  For $k=1$, there is nothing to prove, as every $\T_3$-free tournament has an empty directed feedback vertex set.
  Let $k\geq 2$ be fixed, and let $T$ be a $\T_{2k+1}$-free tournament.
  By \Cref{thm:t(k)-bound-chi(H)}, there is a constant $t=t_k$ such that its triangle digraph $H(T)$ admits a proper $t$-coloring, and by the perfectness of $H(T)$, such a coloring can be found in polynomial time by an application of the ellipsoid algorithm, as explained by Gr{\"o}tschel, Lov{\'a}sz, and Schrijver~\cite{GrotschelLS1984}.
  So, by \Cref{thm:dynamic-program}, a minimum-weight directed feedback vertex set of $(T,w)$ can be computed in time~$O(tn^{t+1})$.
\end{proof}
We remark that our algorithm can easily be made fully combinatorial, and that the ellipsoid method can be avoided.
Ghorbani and Mnich~\cite{GhorbaniM2026} showed that $H(T)$ is a cocomparability graph.
So its complement $\overline{H(T)}$ defines a poset $P$, whose chains are independent sets of $H(T)$.
By Dilworth's theorem, a minimum chain decomposition, and hence an optimal coloring of $H(T)$, can be found in polynomial time via maximum bipartite matching.
Thus, the coloring step can be implemented combinatorially.

\section{A deterministic $(2+\varepsilon)$-approximation algorithm for node-weighted directed feedback vertex set in tournaments}
\label{sec:2+epsilon-appx}
We now present our deterministic $(2+\e)$-approximation algorithm, \Cref{alg:weighted-local-ratio}, for the node-weighted {\sc Directed Feedback Vertex Set} problem in tournaments.

\begin{algorithm}[t]
	\caption{Weighted local-ratio algorithm for \dfvs}
	\label{alg:weighted-local-ratio}
	\begin{algorithmic}[1]
		\Require A tournament $T$ with node weights $w:V(T)\to\mathbb Q_{\ge 0}$, and an integer $k\ge 1$.
		\Ensure A directed feedback vertex set $F_{\mathrm{out}}$ of $T$ with weight at most $\left(2+\nicefrac1k\right)\mfvs(T,w)$.
		
		\State $R\gets V(T)$.
		\State $F_0\gets \emptyset$.
		\State Let $\widetilde w$ be the current weight function, initially $\widetilde w=w$.
		
		\While{there exists a node $v\in R$ with $\widetilde w(v)=0$}
		\State $F_0\gets F_0\cup\{v\}$.
		\State $R\gets R-v$.
		\EndWhile
		
		\While{$R$ contains a subset $S$ such that $T[S]\in{\cal T}_{2k+1}$}\label{line:S-definition}
		\State $\delta\gets \min\{\widetilde w(v):v\in S\}$.
		\For{each $v\in S$}
		\State $\widetilde w(v)\gets \widetilde w(v)-\delta$.
		\EndFor
		
		\While{there exists a node $v\in R$ with $\widetilde w(v)=0$}
		\State $F_0\gets F_0\cup\{v\}$.
		\State $R\gets R-v$.
		\EndWhile
		\EndWhile
		
		\State $T'\gets T[R]$.
		\State Let $F'$ be the output of \Cref{alg:triangle-graph-dp} on $(T',\widetilde w)$.
		\State $F_{\mathrm{out}}\gets F_0\cup F'$.
		\State \Return $F_{\mathrm{out}}$.
	\end{algorithmic}
\end{algorithm}

In the first phase, \Cref{alg:weighted-local-ratio} employs the local-ratio technique.
The input consists of a tournament~$T$ together with a nonnegative node-weight function $w:V(T)\to\mathbb{Q}_{\ge 0}$.
The algorithm repeatedly searches for a subset $S\subseteq V(T)$ such that $T[S]\in\T_{2k+1}$.
Whenever such a subset is found, it subtracts the same amount from the weights of all nodes in $S$, thereby obtaining an updated weight function $\widetilde{w}$.
If the weight of a node $v$ becomes zero, that node is placed into a set~$F_0$, while the remaining nodes form a set $R$.
Once all subtournaments belonging to $\T_{2k+1}$ have been eliminated, the residual tournament $T' = T[R]$ is $\T_{2k+1}$-free.
We can then apply \Cref{alg:triangle-graph-dp} to compute an exact minimum-weight directed feedback vertex set of $T'$.
The approximation ratio of $(2k+1)/k$ follows from the local-ratio phase.
Each iteration processes a subtournament on $2k+1$ nodes, while every directed feedback vertex set must contain at least $k$ nodes from that subtournament.
Consequently, each local-ratio step incurs a factor of at most $(2k+1)/k$.
By choosing~$k$ so that $1/k\le\varepsilon$, we obtain our main result, which we restate below.

For the analysis, we introduce one further notation: given a weight function $z$ on $V(T)$ and a subset $S \subseteq V(T)$, we write $z(S)$ to denote $\sum_{v \in S} z(v)$.

\theoremstyle{plain}\newtheorem*{re-2+epsilon-tournament}{\Cref{thm:2+epsilon-tournament} (restated)}
\begin{re-2+epsilon-tournament}
  Let $k\ge 1$ be fixed.
  For any input tournament $(T,w)$, \Cref{alg:weighted-local-ratio} returns in polynomial time a directed feedback vertex set $F_{\mathrm{out}}$ of $T$ satisfying
  \begin{equation*}
    w(F_{\mathrm{out}})\le \left(2+\nicefrac1k\right)\mfvs(T,w) \enspace .
  \end{equation*}
\end{re-2+epsilon-tournament}
\begin{proof}
  We first verify that the output of \Cref{alg:weighted-local-ratio} is a directed feedback vertex set for the input tournament $T$.
  During the run of algorithm, nodes placed in $F_0$ are deleted from the current residual tournament.
  At the end of the run, the remaining tournament is $T'$.
  The set $F'$ is a directed feedback vertex set of $T'$, computed exactly by \Cref{alg:triangle-graph-dp}.
  Therefore, $T-(F_0\cup F')$	is acyclic.
  Hence, $F_{\mathrm{out}}=F_0\cup F'$ is a directed feedback vertex set of $T$.
	
  Now we prove the approximation guarantee.
  Each time the algorithm finds an $S_j\subseteq R$ with $T[S_j]\in{\cal T}_{2k+1}$, it defines $\delta_j\coloneqq\min\{\widetilde w(v):v\in S_j\}$ and subtracts $\delta_j$ from every node of $S_j$.
  Thus, the original weight function $w$ can be decomposed as
  \begin{equation*}
	  w=w^*+\sum_{j=1}^m \delta_j \mathbf 1_{S_j},
  \end{equation*}
  where $w^*$ is the final residual weight function, extended by zero to all nodes removed during the run of the algorithm, and $\mathbf 1_{S_j}$ is the indicator weight function of $S_j$.
	
  We use the local-ratio principle.
  Let $\alpha\coloneqq2+\nicefrac{1}{k}$.
  It is enough to show that the output $F_{\mathrm{out}}$ is an $\alpha$-approximate solution with respect to each component weight function $w^*$ and $\delta_j\mathbf 1_{S_j}$.
  Let $Z$ be an arbitrary directed feedback vertex set of the input tournament $T$.
	
  First consider a local component $\delta_j\mathbf 1_{S_j}$.
  On the one hand, since $T[S_j]\in{\cal T}_{2k+1}$, every directed feedback vertex set of $T[S_j]$ has size at least $k$.
  Therefore, $|Z\cap S_j|\ge k$.
  Hence,
  \begin{equation*}
	  \delta_j\mathbf 1_{S_j}(Z)=\delta_j |Z\cap S_j|\ge\delta_j k \enspace .
  \end{equation*}
  On the other hand, the algorithm's output can contain at most all nodes of $S_j$.
  Since $|S_j|=2k+1$, we have	
  \begin{equation*}
	  \delta_j\mathbf 1_{S_j}(F_{\mathrm{out}})	=   \delta_j |F_{\mathrm{out}}\cap S_j|
                                              \le \delta_j(2k+1) \enspace .
  \end{equation*}
  Therefore,
  \begin{equation*}
	\delta_j\mathbf 1_{S_j}(F_{\mathrm{out}}) \le	\frac{2k+1}{k}\,	\delta_j\mathbf 1_{S_j}(Z)
	                                            = \alpha\,\delta_j\mathbf 1_{S_j}(Z) \enspace .
  \end{equation*}
  Thus, $F_{\mathrm{out}}$ is an $\alpha$-approximate solution with respect to every local component $\delta_j\mathbf 1_{S_j}$.
	
  Now consider the final residual weight function $w^*$.
  By construction, $w^*$ is zero on all nodes deleted into $F_0$, and equals the final weights $\widetilde w$ on $V(T')$.
  Therefore, $w^*(F_{\mathrm{out}})=w^*(F')$.
  The set~$F'$ is a minimum-weight directed feedback vertex set of $T'$ with respect to the weights $\widetilde w$.  
  We have that $Z\cap V(T')$ is a directed feedback vertex set of $T'$, because any directed cycle in $T'-(Z\cap V(T'))$	would also be a directed cycle in $T-Z$.
  Therefore, by optimality of $F'$ in $T'$, it holds that
  \begin{equation*}
    w^*(F')\le w^*(Z\cap V(T'))\le	w^*(Z) \enspace .
  \end{equation*}
  Hence,
  \begin{equation*}
	  w^*(F_{\mathrm{out}})	\le	w^*(Z)	\le	\alpha w^*(Z) \enspace .
  \end{equation*}
  Thus, the output is an $\alpha$-approximate solution with respect to $w^*$ as well.
  Combining the inequalities over the weight decomposition gives
  \begin{equation*}
	  w(F_{\mathrm{out}}) =   w^*(F_{\mathrm{out}}) + \sum_{j=1}^m \delta_j\mathbf 1_{S_j}(F_{\mathrm{out}})
		                    \le \alpha w^*(Z) + \alpha \sum_{j=1}^m \delta_j\mathbf 1_{S_j}(Z)
	                      =   \alpha w(Z) \enspace .
  \end{equation*}
  Since this relation holds for every directed feedback vertex set $Z$, it holds in particular for any minimum-weight directed feedback vertex set.
  Therefore, $w(F_{\mathrm{out}}) \le \alpha\, \mfvs(T,w)$.
  This establishes the approximation factor.
	
  It remains to analyze the run time of the algorithm.
  For fixed $k$, membership in ${\cal T}_{2k+1}$ can be tested in constant time with respect to $n$, because the subtournament has constant size~$2k+1$.
  A forbidden subtournament can be found by enumerating all subsets of size $2k+1$, which takes time $O(n^{2k+1})$ per search.	
  Each iteration of the outer loop creates at least one new zero-weight node, which is then removed from the residual tournament.
  Hence, the number of weight reduction iterations is at most $n$.	
  Thus the overall time spent searching for forbidden subtournaments is $O(n^{2k+2})$.	
  After the loop terminates, the residual tournament $T'$ is ${\cal T}_{2k+1}$-free.
  A proper $t_k$-coloring of $H(T')$ can be computed in polynomial time for fixed $k$, by the ellipsoid algorithm~\cite{GrotschelLS1984}.
  Applying \Cref{alg:triangle-graph-dp} on $T'$ with $t=t_k$ color classes runs in time~$n^{O(t)}$.
  Therefore, for fixed~$k$, the total run time is polynomial in~$n$.
\end{proof}

\begin{proof}[\bf Proof of \Cref{cor:dfvs-appx-quasi-transitive}]
  Let $D$ be a quasi-transitive digraph with non-negative node weights\linebreak $w:V(D)\to\mathbb Q_{\ge 0}$.	
  Ghorbani and Mnich~\cite[Lemma~4]{GhorbaniM2026} established that there exists a tournament~$T$ with $V(T)=V(D)$ such that $D$ is a subdigraph of $T$ and $H(D)=H(T)$.
	
  We claim that a set $F\subseteq V(D)$ is a directed feedback vertex set of $D$ if and only if it is a directed feedback vertex set of $T$.
  Indeed, every directed cycle in a tournament contains a directed triangle.
  The same is true for quasi-transitive digraphs: if a shortest directed cycle had length at least four, quasi-transitivity applied to two consecutive arcs would either create a directed triangle or a shorter directed cycle, a contradiction.
  Hence, in both $D$ and $T$, a node set is a directed feedback vertex set if and only if it intersects every directed triangle.
	
  Since $H(D)=H(T)$, the directed triangles relevant to the {\sc Directed Feedback Vertex Set} problem are preserved.
  Therefore, for every $F\subseteq V(D)=V(T)$, the set $F$ is a directed feedback vertex set of $D$ if and only if $F$ is a directed feedback vertex set of $T$.
  In particular, $\mfvs(D,w)=\mfvs(T,w)$, and the output $F_{\rm out}$ of \Cref{alg:weighted-local-ratio} is a directed feedback vertex set of~$D$.
  So the result follows.
\end{proof}

\section{Improving the run time of \Cref{alg:weighted-local-ratio}}\label{sec:runtime-improve}
The run time of \Cref{alg:weighted-local-ratio} is dominated by \Cref{alg:triangle-graph-dp} applied to the residual tournament.
In the previous analysis, the residual tournament is $\T_{2k+1}$-free, and hence the clique number of its triangle digraph is bounded by $2^{2^{k-1}}-1$.
This yields a run time of $O(n^{2^{2^{k-1}}})$, up to factors depending only on $k$.
In this section we show how to reduce the doubly-exponential dependence of the run time on $k$ to a single-exponential dependence in $k$.

We can modify the local-ratio phase in \Cref{alg:weighted-local-ratio} so that the triangle digraph of the residual tournament has clique number at most $2^k$.
This property is achieved by excluding, in addition to copies from $\T_{2k+1}$, certain larger subtournaments with the same ratio $\mfvs(T[S])/|S|$.
More precisely, we find subsets $S$ of $V(T)$ such that
\begin{equation*}
  |S|\in\big\{2k+1,2^k\big\}\quad\hbox{and}\quad \frac{\mfvs(T[S])}{|S|}\ge\frac{k}{2k+1} \enspace .
\end{equation*}
This modification only changes Line~\ref{line:S-definition} of \Cref{alg:weighted-local-ratio}; the rest of the algorithm remains unchanged.
We call the resulting algorithm the \emph{modified} \Cref{alg:weighted-local-ratio}.

The modification preserves the approximation guarantee.
Indeed, exactly as in the proof of \Cref{alg:weighted-local-ratio}, every feedback vertex set intersects $S$ in at least $\mfvs(T[S])$ nodes, and it holds that $\mfvs(T[S])/|S|\ge k/(2k+1)$.
Therefore, the usual local-ratio proof applies unchanged, and the modified algorithm still returns a deterministic
$\left(2+\nicefrac1k\right)$-approximation for $\mfvs(T,w)$.

\begin{proposition}
  For fixed $k\ge1$, the run time of the modified \Cref{alg:weighted-local-ratio} is $O(n^{2^k})$.
\end{proposition}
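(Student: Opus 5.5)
The plan has two parts. First, the modified local-ratio phase leaves a residual tournament $T'$ whose triangle digraph has clique number at most $2^k-1$. Second, the run times of the search phase and of \Cref{alg:triangle-graph-dp} are bounded by $O(n^{2^k})$.

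\emph{Structural step.} Suppose that after the modified local-ratio loop terminates, $H(T')$ still contains a clique $Q$ with $|Q|\ge 2^k$. Take any $S\subseteq Q$ with $|S|=2^k$. We must show $\mfvs(T'[S])/|S|\ge k/(2k+1)$; then the loop condition would still fire, a contradiction. The natural tool is the argument behind \Cref{lem:2^{k-1}+1}. In $T'[S]$ every arc lies on a directed triangle of $T'$, but not necessarily of $T'[S]$, so I would rather work directly with $S$. The cleanest route I see is the following. If $\mfvs(T'[S])<\frac{k}{2k+1}2^k$, then $T'[S]$ contains a transitive subtournament $U$ with $|U|>\frac{k+1}{2k+1}2^k\ge 2^{k-1}+1$. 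Hence $T'[U]$ contains $\TT_{2^{k-1}+1}$, and since $U\subseteq Q$, this transitive tournament lies in $H(T')$. By \Cref{lem:2^{k-1}+1}, $T'$ then contains a member of $\T_{2k+1}$, contradicting that the loop also removes those. (For small $k$ one checks $\frac{k+1}{2k+1}2^k\ge 2^{k-1}$ directly, with equality handled by the strict inequality.) Therefore $\omega(H(T'))\le 2^k-1$. By perfectness (\Cref{lem:H(D)perfect}), $H(T')$ admits a proper $t$-coloring with $t\le 2^k-1$, computable in polynomial time.

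\emph{Run time step.} Each search for a set $S$ enumerates subsets of size $2k+1$ or $2^k$ and computes $\mfvs(T[S])$ in time depending only on $k$. One search therefore costs $O(n^{2^k})$ for $k\ge 3$, since $2^k\ge 2k+1$ there. As before, each iteration zeroes at least one node, so there are at most $n$ iterations. This naively gives $O(n^{2^k+1})$, which is one factor of $n$ too many. To remove it, I would run the search incrementally: a new bad set must contain a node whose status changed, or, more simply, after removals any bad set in $R$ was already bad earlier. So a single pass over all candidate subsets suffices, in which we process each subset once, check whether it still lies in $R$ with positive weights, and subtract. This gives $O(n^{2^k})$ total for the local-ratio phase.

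Then \Cref{thm:dynamic-program} with $t\le 2^k-1$ costs $O(tn^{t+1})=O(n^{2^k})$, and the coloring and the $O(n^3)$ preprocessing are dominated by this. The small cases $k\le 2$ are handled by constants.

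I expect the main obstacle to be the structural step. One difficulty is making the transitive-subtournament counting bound $|U|\ge 2^{k-1}+1$ work exactly. The other is making sure the bad sets of size $2k+1$ and $2^k$ interact correctly with \Cref{lem:2^{k-1}+1}; that lemma needs the transitive tournament to lie in $H(T')$, and it does, since $U\subseteq Q$. The single-pass argument removing the extra factor $n$ is the secondary delicate point.
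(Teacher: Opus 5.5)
Your structural step and your single-pass enumeration of candidate sets match the paper's proof. A surviving $2^k$-clique $Q$ of $H(T')$ has ratio below $k/(2k+1)$, so it contains a transitive subtournament on more than $2^{k-1}$ nodes lying in $H(T')$, and \Cref{lem:2^{k-1}+1} yields a member of $\T_{2k+1}$. Hence $t\le 2^k-1$, and \Cref{thm:dynamic-program} costs $O(n^{2^k})$. There is one minor slip: $\frac{k+1}{2k+1}2^k\ge 2^{k-1}+1$ is false for every $k\le 4$ (e.g.\ $32/7<5$ at $k=3$). The correct chain is $|U|>\frac{k+1}{2k+1}2^k>2^{k-1}$ together with integrality of $|U|$, and it needs no case check.

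The genuine gap is the search phase for small $k$. Your bound of $O(n^{2^k})$ for one pass over all candidates relies on $2k+1\le 2^k$, which holds only for $k\ge 3$. For $k=2$ you have $2k+1=5>4=2^k$, so enumerating all $5$-subsets costs $\Theta(n^5)$. That exceeds the claimed $O(n^4)$ by a polynomial factor in $n$, so it cannot be ``handled by constants.''

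The missing idea, which the paper uses, is not to enumerate $5$-subsets at all when $k=2$. By \Cref{lem:2^{k-1}+1}, every copy of $\TT_3$ in the triangle digraph of the current residual tournament certifies a member of $\T_5$ to be processed, so it suffices to search for such copies. One search takes $O(n^3)$ time (build $H$ naively, then scan triples), so at most $n$ weight-reduction iterations cost $O(n^4)$. The $4$-node candidates are enumerated once, also in $O(n^4)$. When this search stops, $H(T')$ contains no $\TT_3$, and since every $4$-node tournament contains $\TT_3$, you get $\omega(H(T'))\le 3$ directly. The same exponent mismatch, $n^3$ versus $n^2$, formally arises for $k=1$; the paper simply sets that case aside.
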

\begin{proof}
  For $k=1$, there is nothing to prove, so assume that $k\ge2$. 

  Let $T'$ be the residual tournament after the modified local-ratio phase.
  Then $T'$ is $\T_{2k+1}$-free.
  We claim that $\omega(H(T'))<2^k$.
  Suppose, for sake of contradiction, that $H(T')$ contains a clique $Q$ of size $2^k$.
  Since $Q$ survives in the residual tournament, the subtournament $T'[Q]$ was not processed by the modified local-ratio phase.
  Hence,
  \begin{equation*}
    \frac{\mfvs(T'[Q])}{|Q|}<\frac{k}{2k+1} \enspace .
  \end{equation*}
  Thus $T'[Q]$ contains a transitive subtournament of size
  \begin{equation*}
    |Q|-\mfvs(T'[Q])>\left(1-\frac{k}{2k+1}\right)|Q|=\frac{k+1}{2k+1}\,2^k>2^{k-1} \enspace .
  \end{equation*}
  Hence $T'[Q]$ contains $\TT_{2^{k-1}+1}$.
  Moreover, since $Q$ is a clique in $H(T')$, $\TT_{2^{k-1}+1}$ lies in $H(T')$.
  By \Cref{lem:2^{k-1}+1}, $T'$ contains a subtournament from $\T_{2k+1}$, contradicting the fact that $T'$ is $\T_{2k+1}$-free.
  Therefore, $\omega(H(T'))\le2^k-1$.
  Since $H(T')$ is perfect, it is $(2^k-1)$-colorable, and hence by \Cref{thm:dynamic-program}, \Cref{alg:triangle-graph-dp} solves the residual weighted instance in time $O(n^{2^k})$. 
  
  It remains to justify the run time of the modified local-ratio phase.
  For fixed $k\ge 3$, we have $2k+1\le 2^k$.
  We enumerate all subtournaments of sizes $2k+1$ and $2^k$ once.
  For each such subtournament, the condition $\mfvs(T[S])/|S|\ge k/(2k+1)$ depends only on the induced tournament and can be
  checked in constant time depending only on $k$.
  During the local-ratio phase, a precomputed candidate $S$ is processed if all its nodes are still present in
  the residual tournament, and otherwise it is skipped.
  Since the residual node set only decreases, one pass over all candidates is sufficient.
  Thus, for fixed $k\ge 3$, the local-ratio phase takes time $O(n^{2^k})$.
  
  For $k=2$, the $4$-node candidates are enumerated in time $O(n^4)$, whereas subtournaments from~$\T_5$ can be detected in time $O(n^3)$ (because by \Cref{lem:2^{k-1}+1} one only needs to locate copies of $\TT_3$ in $H(T')$).
  Thus the case $k=2$ also runs in time $O(n^4)=O(n^{2^k})$.
\end{proof} 

\section{Concluding remarks}
\label{sec:concludingremarks}
Our main result is the first deterministic polynomial-time $(2+\varepsilon)$-approximation algorithm for the unweighted, and node-weighted, {\sc Directed Feedback Vertex Set} problem in tournaments.
We thereby improve upon a sequence of earlier deterministic polynomial-time approximation algorithms  for this problem with worse approximation ratio, and almost match the approximation factor of the best-known randomized, and quasi-polynomial time, approximation algorithms.


Before stating our final remarks, we introduce a family of tournaments and establish some of its properties.
Let $n=2m+1$ be odd.
The \emph{regular cyclic tournament} $R_n$ is the tournament with vertex set $\{0,\dots,2m\}$ in which $i\to j$ if and only if $j-i\pmod n\in\{1,\dots,m\}$.

For a tournament $T$, an \emph{acyclic $k$-coloring} of $T$ is a coloring of its nodes with colors from the set $\{1,\dots,k\}$ such that each color class induces an acyclic (transitive) subtournament.
In fact the dichromatic number $\vec\chi(T)$ of $T$ is the smallest integer $k$ for which $T$ admits an acyclic $k$-coloring.
\begin{proposition}
\label{prop:regular-cyclic-tournament}	
  The tournament $R_{n}$ has the following properties:
  \begin{itemize}
	  \item[\rm(i)] $H(R_n)=R_n$, 
	  \item[\rm(ii)] $\vec\chi(R_n)=2$, 
	  \item[\rm(iii)] $R_n$ contains no subtournament $S$ with $\mfvs(S)\ge|S|/2$. 
  \end{itemize}
\end{proposition}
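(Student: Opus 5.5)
The plan is to prove all three items directly from the arithmetic description of $R_n$, with $n=2m+1$ and $m\ge1$. The one structural fact used in (ii) and (iii) is this: for any $r$, the cyclic interval $I_r\coloneqq\{r,r+1,\dots,r+m\}$ (indices mod $n$) induces a transitive subtournament. Indeed, for $x=r+a$ and $y=r+b$ with $0\le a<b\le m$ we have $y-x\equiv b-a\in\{1,\dots,m\}$, so $x\to y$. The order inside $I_r$ is therefore linear.

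For (i), $H(R_n)$ is a spanning subdigraph of $R_n$, so it suffices to show that every arc $i\to j$ lies in a directed triangle. Let $d\coloneqq j-i \pmod n\in\{1,\dots,m\}$. I look for $w$ with $j\to w\to i$, i.e. $a\coloneqq w-j$ and $b\coloneqq i-w$ both in $\{1,\dots,m\}$ mod $n$. Since $a+b\equiv i-j\equiv 2m+1-d$, I take $a=m$ and $b=m+1-d$. Then $1\le b\le m$, and $w\coloneqq j+m$ works, giving the triangle $i\to j\to w\to i$.

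For (ii), the upper bound comes from partitioning $\{0,\dots,2m\}$ into $I_0=\{0,\dots,m\}$ and $\{m+1,\dots,2m\}\subseteq I_{m+1}$; both are transitive by the interval fact. For the lower bound, $R_n$ is not transitive because by (i) it contains a directed triangle, for instance $0\to m\to 2m\to 0$. Hence $\vec\chi(R_n)=2$.

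For (iii), I will use an averaging argument over rotations. Let $S$ be a nonempty node set with $s\coloneqq|S|$. Each node lies in exactly $m+1$ of the $n$ intervals $I_0,\dots,I_{2m}$, so
\begin{equation*}
  \frac1n\sum_{r=0}^{2m}|S\cap I_r| = \frac{(m+1)s}{2m+1} > \frac s2 \enspace .
\end{equation*}
Hence some $r$ satisfies $|S\cap I_r|>s/2$. Since $S\cap I_r$ induces a transitive subtournament, $S\setminus I_r$ is a feedback vertex set of $R_n[S]$, so $\mfvs(R_n[S])\le s-|S\cap I_r|<s/2$. The only delicate point is the degenerate case $S=\emptyset$, where $0\ge 0$ holds trivially. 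I will read the statement as concerning nonempty subtournaments and say so explicitly.

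I do not expect a real obstacle. The main idea is recognising that rotated half-intervals give many large transitive subsets, which drives both (ii) and the averaging in (iii).
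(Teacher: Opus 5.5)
Your proof is correct. For (i) and (ii) you follow the paper: you use the same triangle $i\to j\to j+m\to i$, and the same split of the nodes into $\{0,\dots,m\}$ and $\{m+1,\dots,2m\}$, which is the paper's $\{v\}\cup N^+(v)$ and $N^-(v)$ for $v=0$. You also supply the lower bound $\vec\chi(R_n)\ge 2$ via the triangle $0\to m\to 2m\to 0$, which the paper leaves implicit.

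For (iii) your route is different. The paper fixes any $v\in S$ and uses that $\{v\}\cup N^+_S(v)$ and $N^-_S(v)\cup\{v\}$ are transitive, cover $S$, and meet only in $v$. Hence one of them has at least $(|S|+1)/2$ nodes. In your notation these two sets are just $S\cap I_v$ and $S\cap I_{v-m}$, so the paper picks two rotations anchored at a node of $S$. You instead double-count over all $n$ rotations $I_r$.

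The paper's two-set pigeonhole is shorter and gives the bound $(s+1)/2$ outright. That bound is at least your average $(m+1)s/(2m+1)$, since $s\le n$; after rounding to integers, both give $\lfloor s/2\rfloor+1$. Your averaging argument needs no distinguished node. It would apply verbatim to any tournament whose node set has a uniform cover by transitive sets each containing more than half of the nodes.

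Both arguments need $S\neq\emptyset$. You flag this explicitly, while the paper uses it tacitly by choosing $v\in S$.
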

\begin{proof} 	(i) It is enough to show that every arc of $R_n$ belongs to a directed triangle.		
	Let $i\to j$ be an arbitrary arc of $R_n$. 
  Then $i\to j\to j+m\to i$ is a directed triangle. (It is easy to verify that $i-(j + m) \pmod n$ belongs to $\{1,\dots,m\}$).
	
	(ii) It suffices to show that $R_n$ can be partitioned into two transitive subtournaments.
	This follows from the stronger property that, in $R_n$, the out-neighborhood and in-neighborhood of any node $v$ are cyclic intervals of length $m$:
\begin{align*}
	N^+_{R_n}(v) & =\{v+1,\dots,v+m\}\hspace{-2.5mm}\pmod{n},\\
	N^-_{R_n}(v) & =\{v-m,\dots,v-1\}\hspace{-2.5mm}\pmod{n}.
\end{align*}
	Note that any cyclic interval of length $m$ induces a transitive subtournament in $R_n$, and hence each of these sets, together with $v$, induces a transitive subtournament.
	
 (iii) From the proof of (ii) it follows that for any $v\in S$ both $\{v\}\cup N_S^+(v)$ and $N_S^-(v)\cup\{v\}$ induce transitive subtournaments of $S$.
   At least one of these has size $1+(|S|-1)/2=(|S|+1)/2$.
   This means $\mfvs(S)<|S|/2.$
\end{proof}

\paragraph{Remarks.}	{\bf1.}
 The approximation ratio $2+\nicefrac{1}{k}$ arises from the local-ratio charges on subtournaments in~$\T_{2k+1}$.
 Indeed, if $Q\in\T_{2k+1}$, then $|Q|/\mfvs(Q)\le 2+\frac{1}{k}$.
It is therefore tempting to replace~$\T_{2k+1}$ by another family $\mathcal F$ of subtournaments satisfying $|Q|/\mfvs(Q)\le 2$ for every $Q\in\mathcal F$, or equivalently $\mfvs(Q)\ge |Q|/2$.
Such a choice would make every local-ratio charge individually compatible with an approximation factor-$2$ analysis.
However, this approach does not provide a useful residual structure for the dynamic programming phase.
The reason is that forbidding subtournaments~$Q$ with $\mfvs(Q)\ge |Q|/2$ does not imply that the triangle digraph of the residual tournament has bounded clique number.
To see this, consider the regular cyclic tournament $R_n$.
By \Cref{prop:regular-cyclic-tournament}, its triangle digraph $H(R_n)$ is $R_n$ itself, and so $\omega(H(R_n))=n$.
Thus the clique number of $H(R_n)$ is unbounded.
On the other hand, $R_n$ contains no subtournament $S$ satisfying \mbox{$\mfvs(S)\ge|S|/2$}.
Therefore, $R_n$ would survive a local-ratio phase that only removes subtournaments~$Q$ with $\mfvs(Q)\ge |Q|/2$, but $H(R_n)$ has clique number $n$.
This shows that the family~$\T_{2k+1}$ is useful not merely because it gives a local-ratio bound, but because forbidding $\T_{2k+1}$ imposes strong structure on the residual tournament: it bounds the clique number of the triangle digraph.

{\bf2.} 
As each color class of a proper coloring of $H(T)$ is independent in $H(T)$, it induces a transitive subtournament of $T$.
Thus one might hope to replace a proper node coloring of $H(T)$ by an acyclic coloring of $T$ and use the color classes of such a coloring in the dynamic program.
Though, this is not sufficient.
The dynamic program does not only require the color classes to be transitive; it also uses the stronger prefix property from \Cref{lem:prefix}.
For an arbitrary acyclic coloring of $T$, the color classes are transitive, but this prefix property need not hold. The set $N_T^-(v)\cap C_i$ can be an arbitrary subset of the transitive order of $C_i$, not necessarily a prefix, and thus the dynamic program no longer applies.

\paragraph{Open question.}
As part of our approach, we give an algorithm for node-weighted {\sc Directed Feedback Vertex Set} on $\mathcal T_{2k+1}$-free tournaments which runs in polynomial time $n^{O(2^k)}$, for every fixed $k$.
This naturally leads to the question whether the problem is fixed-parameter tractable parameterized by $k$ on $\mathcal T_{2k+1}$-free tournaments, and can  thus be solved in time $f(k)\cdot n^{O(1)}$ for some computable function $f$.
A positive answer would be a necessary ingredient to strengthen our approximation algorithm to an algorithm, which would, for every $\varepsilon>0$, compute a $(2+\varepsilon)$-approximation in time $f(1/\varepsilon)\cdot n^{O(1)}$, thereby removing the dependence on $\varepsilon$ from the exponent of the run time.
 
\vspace{.5cm}


	
	
	
	

	\bibliography{bibliography}

\begin{thebibliography}{10}

\bibitem{AprileDFH2023}
Manuel Aprile, Matthew Drescher, Samuel Fiorini, and Tony Huynh.
\newblock A $7/3$-approximation algorithm for feedback vertex set in
  tournaments via {S}herali–{A}dams.
\newblock {\em Discrete Appl. Math.}, 337:149--160, 2023.

\bibitem{AprileDFH2023b}
Manuel Aprile, Matthew Drescher, Samuel Fiorini, and Tony Huynh.
\newblock A tight approximation algorithm for the cluster vertex deletion
  problem.
\newblock {\em Math. Program.}, 197(2):1069--1091, 2023.

\bibitem{BangJensenG2018}
J{\o}rgen Bang-Jensen and Gregory Gutin.
\newblock {\em Classes of directed graphs}, volume~11.
\newblock Springer, 2018.

\bibitem{BangJensenH1995}
Jørgen Bang-Jensen and Jing Huang.
\newblock Quasi-transitive digraphs.
\newblock {\em J. Graph Theory}, 20(2):141--161, 1995.

\bibitem{BarYehudaBFR2004}
Reuven Bar-Yehuda, Keren Bendel, Ari Freund, and Dror Rawitz.
\newblock Local ratio: A unified framework for approximation algorithms. in
  memoriam: Shimon {E}ven 1935-2004.
\newblock {\em ACM Comput. Surv.}, 36(4):422--463, December 2004.

\bibitem{BergerCCFLSST2013}
Eli Berger, Krzysztof Choromanski, Maria Chudnovsky, Jacob Fox, Martin Loebl,
  Alex Scott, Paul Seymour, and St\'ephan Thomass\'e.
\newblock Tournaments and colouring.
\newblock {\em J. Combin. Theory Ser. B}, 103(1):1--20, 2013.

\bibitem{CaiDZ1998}
Mao-cheng Cai, Xiaotie Deng, and Wenan Zang.
\newblock A {TDI} system and its application to approximation algorithms.
\newblock In {\em Proc. FOCS 1998}, page 227, 1998.

\bibitem{CaiDZ2001}
Mao-Cheng Cai, Xiaotie Deng, and Wenan Zang.
\newblock An approximation algorithm for feedback vertex sets in tournaments.
\newblock {\em SIAM J. Comput.}, 30(6):1993--2007, 2001.

\bibitem{ErdosMoser1964}
Paul Erd{\H{o}}s and Leo Moser.
\newblock A problem on tournaments.
\newblock {\em Canadian Mathematical Bulletin}, 7(3):351--356, 1964.

\bibitem{FioriniJS2016}
Samuel Fiorini, Gwena{\"e}l Joret, and Oliver Schaudt.
\newblock Improved approximation algorithms for hitting 3-vertex paths.
\newblock In {\em Proc. IPCO 2016}, pages 238--249, 2016.

\bibitem{FioriniJS2020}
Samuel Fiorini, Gwena{\"e}l Joret, and Oliver Schaudt.
\newblock Improved approximation algorithms for hitting 3-vertex paths.
\newblock {\em Math. Program.}, 182(1):355--367, 2020.

\bibitem{GhorbaniM2026}
Ebrahim Ghorbani and Matthias Mnich.
\newblock A $9/4$-approximation for directed feedback vertex sets in
  quasi-transitive digraphs.
\newblock In {\em Proc. ICALP 2026}, Leibniz Int. Proc. Informatics, pages
  151:1--151:16, 2026.
\newblock Article No. 151.

\bibitem{GrotschelLS1984}
Martin Gr{\"o}tschel, L{\'a}szl{\'o} Lov{\'a}sz, and Alexander Schrijver.
\newblock Polynomial algorithms for perfect graphs.
\newblock {\em Annals Discrete Math.}, 21:325--356, 1984.

\bibitem{GuruswamiL2014}
Venkatesan Guruswami and Euiwoong Lee.
\newblock Inapproximability of feedback vertex set for bounded length cycles.
\newblock In {\em Electronic colloquium on computational complexity (ECCC)},
  volume~21, page~2, 2014.

\bibitem{KhotR2008}
Subhash Khot and Oded Regev.
\newblock Vertex cover might be hard to approximate to within $2-\varepsilon$.
\newblock {\em J. Comput. Syst. Sci.}, 74(3):335--349, 2008.

\bibitem{LokshtanovK2016}
Mithilesh Kumar and Daniel Lokshtanov.
\newblock Faster exact and parameterized algorithm for feedback vertex set in
  tournaments.
\newblock In {\em Proc. STACS 2016}, volume~47 of {\em Leibniz Int. Proc.
  Informatics}, pages 49:1--49:13, 2016.

\bibitem{LokshtanovMMPPS2020}
Daniel Lokshtanov, Pranabendu Misra, Joydeep Mukherjee, Fahad Panolan,
  Geevarghese Philip, and Saket Saurabh.
\newblock 2-approximating feedback vertex set in tournaments.
\newblock In {\em Proc. SODA 2020}, pages 1010--1018, 2020.

\bibitem{Lovasz1972}
L.~Lov\'asz.
\newblock Normal hypergraphs and the perfect graph conjecture.
\newblock {\em Discrete Math.}, 2(3):253--267, 1972.

\bibitem{MnichT2018}
M.~Mnich and E.~Teutrine.
\newblock Improved bounds for minimal feedback vertex sets in tournaments.
\newblock {\em J. Graph Theory}, 88(3):482--506, 2018.

\bibitem{MnichVWV2016}
Matthias Mnich, Virginia Vassilevska~Williams, and L\'{a}szl\'{o}~A. V\'{e}gh.
\newblock A $7/3$-approximation for feedback vertex sets in tournaments.
\newblock In {\em Proc. ESA 2016}, volume~57 of {\em Leibniz Int. Proc.
  Informatics}, pages 67:1--67:14, 2016.

\bibitem{Speckenmeyer1990}
Ewald Speckenmeyer.
\newblock On feedback problems in digraphs.
\newblock In {\em Proc. WG 1990}, volume 411 of {\em Lecture Notes Comput.
  Sci.}, pages 218--231, 1990.

\bibitem{Vegh}
L{\'a}szlo V{\'e}gh.
\newblock Personal communication, as stated in [2].

\bibitem{YouWC2017}
Jie You, Jianxin Wang, and Yixin Cao.
\newblock Approximate association via dissociation.
\newblock {\em Discrete Appl. Math.}, 219:202--209, 2017.

\end{thebibliography}
\end{document}